\numberwithin{equation}{section}
\newtheorem{theorem}{Theorem}[section]
\newtheorem{lemma}[theorem]{Lemma}
\newtheorem{proposition}[theorem]{Proposition}
\theoremstyle{definition}
\newtheorem{definition}[theorem]{Definition}
\newtheorem{remark}[theorem]{Remark}
\newtheorem*{notation}{Notation}
\newcommand{\ind}{1\hspace{-2.1mm}{1}}
\newcommand{\D}{\mathrm{d}}
\newcommand{\RR}{\mathbb{R}}
\newcommand{\eps}{\varepsilon}
\newcommand{\Dd}{\mathcal{D}}
\newcommand{\E}{\mathrm{e}}
\begin{document}

\title{Asymptotic arbitrage in the Heston model}
\author{Fatma Haba}
\address{Department of Mathematics, Universit\'e Tunis El Manar}
\email{haba\_fatma@hotmail.fr}

\author{Antoine Jacquier}
\address{Department of Mathematics, Imperial College London}
\email{a.jacquier@imperial.ac.uk}

\date{\today}
\thanks{The authors are indebted to the anonymous referee for his/her helpful suggestions.}

\begin{abstract}
In the context of the Heston model, we establish a precise link between the set of equivalent martingale measures, 
the ergodicity of the underlying variance process and the concept of asymptotic arbitrage proposed in Kabanov-Kramkov~\cite{Kabanov}
and in F\"ollmer-Schachermayer~\cite{Fol}.
\end{abstract}

\keywords {Stochastic volatility model, Heston model, asymptotic arbitrage, large deviations}
\maketitle

\section{Introduction}

The concept of arbitrage is the cornerstone of modern mathematical finance, and
several versions of the so-called fundamental theorem of asset pricing have been proved over the past two decades,
see for instance~\cite{Del} for an overview.
A version of it essentially states that absence of arbitrage is equivalent to the existence of an equivalent martingale measure
under which discounted asset prices are true martingales.
This then allows the use of `martingale models' (either continuous or with jumps) as underlying dynamics for option pricing.
In practice, should short-term arbitrages arise---due to some market discrepancies---they are immediately exploited by traders,
and market liquidity therefore acts as an equilibrium agent, to prevent them from occurring significantly.
It can be argued, however, that one may generate long-term riskless profit, when the time horizon tends to infinity.
This turns out to be the case in most models used in practice, 
and the existence and nature of such infinite horizon asymptotic arbitrage opportunities have been studied in~\cite{Du,Ire,Rok}.

Among the plethora of models used and analysed both in practice and in theory,
stochastic volatility models have proved to be very flexible and suitable for pricing and hedging.
Due to its affine structure, the Heston model~\cite{Heston} has gained great popularity
among practitioners for equity and FX derivatives modelling (see~\cite{Fou,Gatheral} for a detailed account).
Because of the correlation between the asset price and the underlying volatility, the market is incomplete,
and the Heston model admits an infinity of equivalent martingale measures.
Its affine structure allows us to study precisely the existence (or absence) of asymptotic arbitrage.
Specifically, we shall endeavour to understand how the parameters of the model influence
the nature---such as its speed and existence---of this asymptotic arbitrage.
Of particular interest will be the link between asymptotic arbitrage and the ergodicity of the underlying variance process.
In~\cite{Fol} the authors proved under suitable regularity conditions that price processes with a non-trivial market price of risk
(see Definition~\ref{def:MarketPrice}) allow for asymptotic arbitrage (with linear speed).
Using the theory of large deviations, we shall show that $S$ may allow for such arbitrage even if
it does not admit an average squared market price of risk.

The organisation of this paper is as follows: all the notations and definitions are given in Section~\ref{sec:Notations}.
Asymptotic arbitrage in the Heston model is studied in Section~\ref{sec:Main};
the main contribution of this paper is Theorem~\ref{thm:AsymptArb},
which identifies sufficient (and sometimes necessary) conditions on the set of equivalent martingale measures under which
asymptotic arbitrage occur with linear speed.
These conditions are different from those in Proposition~\ref{Propslowerspeed} in which we study
the role of the ergodicity of the variance process on the existence of asymptotic arbitrage with slower speed.

\section{Notations and definitions}\label{sec:Notations}

Let $(\Omega,\mathcal{F},\mathbb{F},\mathbb{P})$ be a filtered probability space where the filtration $\mathbb{F}=(\mathcal{F}_t)_{t\geq0}$ satisfies the usual conditions,
$S=\E^{X}$ model a risky security under an equivalent martingale measure,
and let $\mathcal{H}$ denote the class of predictable, $S$-integrable admissible processes.
We define for each $t\geq 0$ the sets
of strategies~$K_t$
and of equivalent local martingales $\mathcal{M}_t^{e}(S)$ by
$$K_t:=\left\{\int_0^t H_s \D S_s: H\in \mathcal{H}\right\}
\quad
\text{and}
\quad
\mathcal{M}_t^{e}(S):=
\left\{\mathbb{Q}\sim\mathbb{P}: \text{ such that }(S_u)_{0\leq u\leq t}\text{ is a local }
\mathbb{Q}\text{-martingale}\right\}.
$$
We shall always assume that $\mathcal{M}_t^{e}(S)$ is not empty.
Furthermore, for any set $A$ in $\Omega$,
we shall denote by $A^c:=\Omega\setminus A$ its complement.

\subsection{Asymptotic arbitrage}
We are interested here in specific forms of arbitrage (asymptotic arbitrage),
first introduced by Kabanov and Kramkov~\cite{Kabanov}, and refined recently
by F\"{o}llmer and Schachermayer~\cite{Fol}.
The following definition of a $(\eps_1,\eps_2)$ arbitrage is taken from the latter~\cite[Proposition 2.1]{Fol}:
\begin{definition}\label{def:eps1eps2Arb}
The process $S$ admits an $(\eps_1,\eps_2)$-arbitrage 
if for $(\eps_1,\eps_2)\in (0,1)^2$, there exists $X_t \in K_t$ such that
\begin{enumerate}[(i)]
\item $X_t \geq-\eps_2$ $\mathbb{P}$-almost surely;
\item $\mathbb{P}(X_t \geq 1-\eps_2)\geq1-\eps_1$.
\end{enumerate}
\end{definition}
This means that the maximal loss of the trading strategy, yielding the wealth $X_t$ at time $t$,
is bounded by $\eps_2$ and, with probability $1-\eps_1$,
the terminal wealth $X_t$ equals at least $1-\eps_2$.
Let us consider the following  slightly weaker version,
which imposes less stringent restrictions on the maximal loss:
\begin{definition}\label{def:eps1eps2ArbNew}
Let $(e_1,e_2)\in (0,1)^2$.
We say that the process $S$ admits a partial $(e_1,e_2)$-arbitrage (up to time $t>0$)
if for any $(\eps_1,\eps_2)\in (e_1,1)\times (e_2,1)$, there exists $X_t \in K_t$ such that
\begin{enumerate}[(i)]
\item $X_t \geq-\eps_2$ $\mathbb{P}$-almost surely;
\item $\mathbb{P}(X_t\geq 1-\eps_2)\geq1-\eps_1$.
\end{enumerate}
\end{definition}
Obviously, Definition~\ref{def:eps1eps2ArbNew} is equivalent to Definition~\ref{def:eps1eps2Arb} when $e_1=e_2=0$.
However, as we shall see later, (partial) asymptotic arbitrage may only appear for $e_1$ exponentially small, 
but necessarily equal to zero.
When $e_1$ is not null, this alternative definition may also be seen as a mid-point characterisation between  Definition~\ref{def:eps1eps2Arb}
and  Definition~\ref{def:ExpArb} below.
The latter in particular characterises the notion of asymptotic exponential arbitrage with exponentially decaying failure probability,
first proposed in~\cite{Fol} and studied later in~\cite{Bidima} and~\cite{Du}.

\begin{definition}\label{def:ExpArb}
The process $S$ allows for asymptotic exponential arbitrage with exponentially decaying failure probability
if there exist $t_0\in(0,\infty)$ and constants $C,\lambda_1,\lambda_2>0$
such that for all $t\geq t_0$, there is $X_t\in K_t$ satisfying
\begin{enumerate}[(i)]
\item $X_t \geq -\E^{-\lambda_2 t}$ $\mathbb{P}$-almost surely;
\item $\mathbb{P}(X_t \leq \E^{\lambda_2 t}) \leq C\E^{-\lambda_1 t}$.
\end{enumerate}
\end{definition}
\begin{remark}
We see that there is a relation between Definition~\ref{def:ExpArb} and Definition~\ref{def:eps1eps2ArbNew}. 
If the process~$S$ allows for asymptotic exponential arbitrage with exponentially decaying failure probability, 
then it admits a partial $(e_1,e_2)$-arbitrage with $e_1=C\E^{-\lambda_1 t}$ and $e_2=\E^{-\lambda_2 t}$.
Indeed, for any $\eps_2\in(\E^{-\lambda_2 t},1)$ we have $X_t\geq -\E^{-\lambda_2 t}\geq -\eps_2$.
Now for large $t$, we have $1-C\E^{-\lambda_1t}\leq\mathbb{P}(X_t>\E^{\lambda_2 t})\leq\mathbb{P}(X_t\geq 1-\E^{-\lambda_2 t})=\mathbb{P}(X_t\geq 1-\eps_2)$. Then for any $\eps_1\in(C\E^{-\lambda_1t},1)$ with $C\in(0,1)$, we have $\mathbb{P}(X_t\geq 1-\eps_2)\geq 1-\eps_1$.
\end{remark}
\begin{definition}\label{def:MarketPrice}
Let $f:\RR_+^*\to\RR_+^*$ be a smooth function such that $\lim_{t\uparrow \infty}f(t)=+\infty$.
The process~$S$ is said to have an average squared market price of risk $\gamma$
above the threshold $c>0$ with speed~$f(t)$ if
$\mathbb{P}\left(f(t)^{-1}\int_{0}^{t}\gamma^2(s) \D s<c\right)$ tends to zero
as $t$ tends to infinity.
\end{definition}

\subsection{Stochastic volatility models}
We consider here the Heston stochastic volatility model, namely the unique strong solution
to the stochastic differential equations~\eqref{eq:SDEHeston} below.
As is well-known~\cite{Ber}, there may not be a unique risk-neutral martingale measure for this .
The following SDEs are therefore understood under one such risk-neutral measure $\mathbb{Q}$.
\begin{equation}\label{eq:SDEHeston}
\begin{array}{rll}
\D S_t / S_t & = \mu \D t+ \sqrt{V_t}\left(\rho \D W_1(t)+\sqrt{1-\rho^{2}}\D W_2(t)\right),\quad & S_0=1\\
\D V_t & = (a-bV_t)\D t+\sqrt{2\sigma V_t}\D W_1(t),\quad & V_0>0,
\end{array}
\end{equation}
where $W_1$ and $W_2$ are independent $\mathbb{Q}$-Brownian motions, $a,\sigma>0$,
$\mu, b\in\mathbb{R}$ and $|\rho|<1$.
The class of equivalent martingale measures $\mathbb{Q}$ can be considered in terms of the Radon-Nikodym derivatives
\begin{equation}\label{eq:Z}
Z_t
 := \left.\frac{\D\mathbb{Q}}{\D\mathbb{P}}\right|_{\mathcal{F}_t}
 = \exp\left\{-\left(\int_0^t\gamma_1(s)\D W_1(s)+\int_0^t\gamma_2(s) \D W_2(s)\right)
-\frac{1}{2}\left(\int_0^t\gamma_1^{2}(s)\D s+\int_0^t\gamma_2^{2}(s)\D s\right)\right\}.
\end{equation}
The condition $\mu-r=\sqrt{V_t}(\rho\gamma_1(t)+\sqrt{1-\rho^2}\gamma_2(t))$ is necessary for an equivalent local martingale measure to exist, and ensures that the discounted stock price is a local martingale;
here $r$ denotes the constant risk-free rate.
Since $Z$ is a positive local martingale with $Z_0=1$, it is a supermartingale, and a true martingale if and only if  $\mathbb{E}(Z_t)=1$.
For the Heston stochastic volatility model we obtain, for any~$\lambda\in\RR$,
\begin{equation}\label{eq:DefGamma}
\gamma_1(t) = \lambda\sqrt{V_t}
\qquad\text{and}\qquad
\gamma_2(t) = \frac{1}{\sqrt{1-\rho^2}}\left(\frac{\mu-r}{\sqrt{V_t}}-\lambda \rho\sqrt{V_t}\right).
\end{equation}

\section{Main results}\label{sec:Main}
For any $(\alpha, \beta,\delta)\in\RR^3$, we introduce the process $(X_t^{\alpha,\beta,\delta})_{t\geq 0}$ defined (pathwise) by
\begin{equation}\label{eq:XProcess}
X_t^{\alpha,\beta,\delta}:=\alpha V_t+\beta\int_0^t V_s\D s+\delta\int_0^t  V_s^{-1}\D s,
\quad\text{for any }t\geq 0,
\end{equation}
where $V$ is the Feller diffusion for the variance in~\eqref{eq:SDEHeston}.
We shall always assume that $\beta$ and $\delta$ are not both null simultaneously.
In that case, $X$ is simply the Feller diffusion,
and its density is known in closed form~\cite[Part 1, Chapter 6.3]{JYCBook}.
The large-time behaviour of $X$ will play a key role in determining average squared market prices of risk,
and the case $\beta=\delta=0$ will never occur, so this assumption does not entail any loss of generality here.
Define the real interval $\Dd_{\beta,\delta}$ by
\begin{equation}\label{eq:DomainD}
\Dd_{\beta,\delta}=
\left\{
\begin{array}{ll}
\displaystyle \left[\frac{(a-\sigma)^2 }{4\sigma\delta}, \frac{b^2 }{4\sigma\beta}\right], & \text{if } \beta>0,\;\delta<0,\\
\displaystyle \left(-\infty,\frac{(a-\sigma)^2 }{4\sigma\delta}\wedge\frac{b^2 }{4\sigma\beta}\right], & \text{if } \beta>0,\; \delta>0,\\
\displaystyle \left[\frac{b^2 }{4\sigma\beta}, \frac{(a-\sigma)^2 }{4\sigma\delta}\right], & \text{if } \beta<0,\; \delta>0,\\
\displaystyle \left[\frac{(a-\sigma)^2}{4\sigma\delta}\vee\frac{b^2}{4\sigma\beta},+\infty\right), & \text{if } \beta<0, \;\delta<0.
\end{array}
\right.
\end{equation}
Whenever $\beta\delta=0$, we define $\Dd_{\beta,\delta}$ by taking the limits of the interval (a closed bound becoming open if it becomes infinite),
where we use the slight abuse of notation $"1/0=\infty"$, i.e.
$\Dd_{\beta,\delta}=\left(-\infty,\frac{b^2}{4\sigma\beta}\right]$ if $\beta>0$ and $\delta=0$,
$\Dd_{\beta,\delta}=\left[\frac{b^2}{4\sigma\beta} ,+\infty\right)$ if $\beta<0$ and $\delta=0$.
Let us further define the function $\Lambda^{\beta,\delta}:\Dd_{\beta,\delta}\to\RR$ by
\begin{equation}\label{eq:Lambda}
\Lambda^{\beta,\delta} (u) =
\left\{
\begin{array}{ll}
\displaystyle \frac{ba}{2\sigma}-\frac{1}{2\sigma}\sqrt{((a-\sigma)^2-4\sigma\delta u)(b^2-4\sigma\beta u)}-\frac{1}{2}\sqrt{b^2-4\sigma\beta u},
 & \text{if }\delta\ne 0,\\
\\
\displaystyle \frac{a}{2\sigma}\left(b-\sqrt{b^2-4\sigma\beta u}\right),
 & \text{if }\delta=0.
\end{array}
\right.
\end{equation}
In the case $\delta \ne 0$ above, we further impose the condition $a>\sigma$
for the definition of the function~$\Lambda^{\beta,\delta}$.
\begin{remark}
It may be surprising at first that the function $\Lambda^{\beta,\delta}$ related---in some sense defined precisely below---does
not depend on $\alpha$.
This function actually describes the large-time behaviour of the process~$X^{\alpha,\beta,\delta}$.
Since the variance process $V$ is strictly positive almost surely (by the Feller condition imposed above),
the term $\int_0^t V_s \D s$ clearly dominates $V_t$ for any $t$, which explains why $\alpha$
bears no influence on~$\Lambda^{\beta,\delta}$.
The condition $a>\sigma$ imposed above in the case $\delta\ne 0$ should not surprise the reader since
this is nothing else than the Feller condition, ensuring that the variance process never touches the origin almost surely.
\end{remark}

We further define the Fenchel-Legendre transform $\Lambda_{\beta,\delta}^*:\RR\to\RR_+$ of $\Lambda^{\beta,\delta}$ by
\begin{equation}\label{eq:LambdaStar}
\Lambda_{\beta,\delta}^*(x):=\sup_{u\in\Dd_{\beta,\delta}}\{ux-\Lambda^{\beta,\delta}(u)\}.
\end{equation}

\begin{notation}
Whenever $\beta=0$ or $\delta=0$, we shall drop the subscript and write respectively $\Lambda^\delta$ or $\Lambda^\beta$.
The same rule shall apply for the Fenchel-Legendre transforms and their respective domains.
\end{notation}

In general, $\Lambda^*_{\beta,\delta}$ does not have a closed-form representation.
However when $\delta$ is null---which shall be of interest for us---it actually does,
and a straightforward computation shows that
\begin{equation}\label{eq:Lambda*Delta0}
\Lambda^*_{\beta,0}(x) \equiv \Lambda^*_{\beta}(x) = \frac{(bx-a\beta )^2}{4\sigma |\beta x|},
\quad\text{for all }x\in\RR^*.
\end{equation}
In that case, the function $\Lambda^*_{\beta}$ is strictly convex on $\RR_+^*$ (respectively on $\RR_-^*$) with a unique minimum
attained at $|a\beta/b|$ (resp. at $-|a\beta/b|$).
In particular, if $b\beta>0$, then $\Lambda^*_{\beta}(|a\beta/b|) = 0$ and  $\Lambda^*_{\beta}(x)>0$
for all $x\in\RR_+^*\setminus\{|a\beta/b|\}$.
Symmetric statements hold on $\RR_-$.

\subsection{The large deviations case}\label{sec:Limitl}
In this section, we prove asymptotic arbitrage results (with linear speed) for the stock price process;
we shall in particular observe that the ergodicity of the variance process plays a key role.
We first start with the following lemma (proved in Appendix~\ref{App:Appendix}), which will be used heavily in the remaining of the paper.
For precise definitions of large deviations principles (LDP), we refer the reader to the excellent monograph~\cite{DZ};
we shall use the non-standard terminology `partial large deviations principles' if an LDP holds only on subsets of the real line.


\begin{lemma}\label{lem:triplet}
As $t$ tends to infinity, the family $(t^{-1}X_t^{\alpha,\beta,\delta})_{t\geq0}$ satisfies
\begin{enumerate}[(i)]
\item a full LDP (on $\RR$) if $\beta\delta<0$;
\item a partial LDP on $\left(2\sqrt{\delta\beta},+\infty\right)$ if $\beta\geq 0$ and $\delta\geq 0$;
\item a partial LDP on $\left(-\infty,-2\sqrt{\delta\beta}\right)$ if $\beta\leq 0$ and $\delta\leq 0$;
\end{enumerate}
In each case, the rate function is $\Lambda^*_{\beta,\delta}$ and the (partial) LDP holds with speed $t^{-1}$.
\end{lemma}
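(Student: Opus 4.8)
The plan is to reduce everything to a Gärtner–Ellis computation for the scaled cumulant generating function of $X_t^{\alpha,\beta,\delta}$. The first step is to observe, exactly as in the remark following~\eqref{eq:LambdaStar}, that the $\alpha V_t$ term is exponentially negligible on the scale $t^{-1}$: since $V$ is a Feller diffusion with $V_0>0$, one shows $t^{-1}\log\mathbb{E}\exp(u\alpha V_t)\to 0$, so that the limiting cumulant generating function of $t^{-1}X_t^{\alpha,\beta,\delta}$ coincides with that of $t^{-1}\bigl(\beta\int_0^t V_s\,\D s+\delta\int_0^t V_s^{-1}\,\D s\bigr)$. The core computation is then
$$
\Lambda^{\beta,\delta}(u)=\lim_{t\to\infty}\frac{1}{t}\log\mathbb{E}\exp\!\left(u\beta\int_0^t V_s\,\D s+u\delta\int_0^t V_s^{-1}\,\D s\right),
$$
which is the classical exponential functional of a CIR/Feller process. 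One evaluates it via the Feynman–Kac PDE (or equivalently the known affine/Riccati transform for $\int V$ together with the Bessel-type transform handling the $\int V^{-1}$ term): the relevant Laplace transform $\mathbb{E}\exp(-p\int_0^t V_s\,\D s-q\int_0^t V_s^{-1}\,\D s)$ is explicit, its exponential growth rate is governed by the dominant eigenvalue of the associated Sturm–Liouville operator, and a direct calculation identifies that rate with the function in~\eqref{eq:Lambda} after the substitutions $p=-u\beta$, $q=-u\delta$. Carrying out this identification — and in particular checking that the effective domain of $\Lambda^{\beta,\delta}$ is exactly $\Dd_{\beta,\delta}$ (the constraints $b^2-4\sigma\beta u\geq 0$ and $(a-\sigma)^2-4\sigma\delta u\geq 0$ being precisely the conditions under which the exponential moment is finite) — is the technical heart of the argument.

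With $\Lambda^{\beta,\delta}$ in hand, the rest is Gärtner–Ellis. When $\beta\delta<0$ the domain $\Dd_{\beta,\delta}$ is a compact interval containing $0$ in its interior, $\Lambda^{\beta,\delta}$ is finite, differentiable and essentially smooth there (one checks the derivative blows up at the finite endpoints, so steepness holds), and the Gärtner–Ellis theorem yields the full LDP on $\RR$ with rate $\Lambda^*_{\beta,\delta}$; this gives case~(i). When $\beta>0,\delta>0$ the domain is a half-line $(-\infty,u_*]$ and $\Lambda^{\beta,\delta}$ is finite and differentiable on its interior with $(\Lambda^{\beta,\delta})'(u)\to 2\sqrt{\delta\beta}$ as $u\to-\infty$ and $(\Lambda^{\beta,\delta})'(u)\to+\infty$ as $u\uparrow u_*$; hence the Legendre transform is controlled only for $x>2\sqrt{\delta\beta}$, and the partial form of Gärtner–Ellis (upper bound for all closed sets, lower bound for open subsets of the exposed-points region, which here is $(2\sqrt{\delta\beta},\infty)$) gives case~(ii). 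Case~(iii) follows by the symmetry $(\beta,\delta,V)\mapsto$ the reflected situation — equivalently by replacing $u$ with $-u$ — and case~(iv) is obtained by letting $\delta\to 0$ or $\beta\to 0$ in the formulas, matching the one-sided statement of Lemma~\ref{lem:LDPV} and the degenerate domains described after~\eqref{eq:DomainD}; continuity of $\Lambda^{\beta,\delta}$ and $\Lambda^*_{\beta,\delta}$ in the parameters makes this limit harmless.

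The main obstacle I anticipate is twofold: first, the explicit evaluation of the joint exponential functional $\mathbb{E}\exp(u\beta\int V+u\delta\int V^{-1})$ and the verification that its exponential rate is *exactly* the square-root expression in~\eqref{eq:Lambda} on *exactly* the domain $\Dd_{\beta,\delta}$ — this requires care with the Feller condition $a>\sigma$ (needed for $V^{-1}$ to be integrable on the relevant scale) and with which root of the Riccati equation is the stable/dominant one as $t\to\infty$; second, the passage from a full to a *partial* LDP in cases~(ii)–(iii), where one must argue that no large-deviation mass escapes to $x\le 2\sqrt{\delta\beta}$ at exponential rate (an exponential-tightness/lower-bound argument at the boundary) so that the partial statement is not vacuous. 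Both steps are standard in the large-deviations-for-affine-processes literature but are exactly where the bookkeeping lives; the Gärtner–Ellis machinery itself is routine once $\Lambda^{\beta,\delta}$ and its steepness are pinned down.
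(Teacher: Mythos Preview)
Your strategy is correct and follows the same Gärtner--Ellis route as the paper: compute the limiting scaled cumulant generating function $\Lambda^{\beta,\delta}$ on its effective domain $\Dd_{\beta,\delta}$, check convexity/steepness, and read off the (partial) LDP from the image of $\partial_u\Lambda^{\beta,\delta}$. The paper's execution differs from yours in three minor ways. First, rather than arguing away the $\alpha V_t$ term separately and then invoking Feynman--Kac/spectral methods for the joint functional $\int V$ and $\int V^{-1}$, the paper quotes an explicit closed-form expression for the full moment generating function of $X_t^{\alpha,\beta,\delta}$ (from Ben~Alaya--Kebaier, involving confluent hypergeometric functions), and extracts the limit $\Lambda^{\beta,\delta}$ by direct asymptotics of that formula; the $\alpha$-dependence then visibly drops out in the limit, avoiding any Hölder-type decoupling of $V_t$ from the integrals. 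Second, the paper treats the cases $b\neq 0$ and $b=0$ separately via two different literature formulae, and for case~(iv) simply specialises the explicit derivative computation, rather than passing to the limit $\beta\to 0$ or $\delta\to 0$ as you propose. Third, your concern about ``mass escaping below $2\sqrt{\delta\beta}$'' is not needed: the paper's statement of the partial LDP is precisely that the Gärtner--Ellis bounds hold on subsets of $\partial_u\Lambda^{\beta,\delta}(\Dd_{\beta,\delta}^o)$, and no exponential-tightness argument at the boundary is invoked or required.
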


In~\cite[Theorem 1.4]{Fol}, F\"ollmer and Schachermayer proved that if the stock price process has an average market price of risk above a threshold then asymptotic arbitrage holds.
Using the large deviations principle proved above, we first show that $S$ does not always admit an average market price of risk for $\gamma_1$ (Proposition~\ref{prop:AvgSqGamma1}) 
or $\gamma_2$ (Proposition~\ref{prop:AvgSqGamma2}) above any threshold.
This is in particular so when the variance process is not ergodic ($b\leq 0$).
This however---as proved in Theorem~\ref{thm:AsymptArb} below---does not preclude absence of asymptotic arbitrage.

\begin{proposition}\label{prop:AvgSqGamma1}
Fix $\lambda\geq 0$ and $c>0$.
The stock price process does not satisfy an average squared market price of risk $\gamma_1$ above the threshold $c$ with speed $t$ if either (i) $b\leq 0$ or (ii) $b>0$ and $c > a\lambda^2/b$.
\end{proposition}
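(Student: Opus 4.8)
The plan is to recast the condition on $\gamma_1$ as a condition on the additive functional $X_t^{0,1,0}=\int_0^t V_s\,\D s$ of the Feller diffusion, and then to read off the conclusion from Lemma~\ref{lem:LDPV}, the behaviour being dictated by the sign of $b$ (i.e.\ by whether the variance process is ergodic). From~\eqref{eq:DefGamma}, $\gamma_1(s)^2=\lambda^2V_s$, hence $\int_0^t\gamma_1^2(s)\,\D s=\lambda^2X_t^{0,1,0}$. If $\lambda=0$ the probability in Definition~\ref{def:MarketPrice} equals $\PP(0<c)=1$ for all $t$ and the claim is immediate in both cases; so from now on $\lambda>0$ and we study $\PP\bigl(f(t)^{-1}X_t^{0,1,0}<c/\lambda^2\bigr)$, with $c/\lambda^2>0$.

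\emph{Case (ii), $b>0$}, with the linear speed $f(t)=t$. Since $\beta=1$ gives $b\beta>0$, formula~\eqref{eq:Lambda*Delta0} shows that $\Lambda^*_1(x)=(bx-a)^2/(4\sigma x)$ is a good rate function on $\RR_+^*$ (it blows up at $0^+$ and at $+\infty$), vanishing only at $a/b$ and strictly increasing on $[a/b,+\infty)$. The hypothesis $c>a\lambda^2/b$ is exactly $c/\lambda^2>a/b$, so the closed set $F:=[c/\lambda^2,+\infty)$ has $\inf_F\Lambda^*_1=\Lambda^*_1(c/\lambda^2)>0$, and the large deviations upper bound of Lemma~\ref{lem:LDPV} yields
\[
\limsup_{t\to\infty}\frac1t\log\PP\Bigl(\frac1tX_t^{0,1,0}\geq\frac{c}{\lambda^2}\Bigr)\leq-\Lambda^*_1\!\left(\frac{c}{\lambda^2}\right)<0 .
\]
Hence $\PP(t^{-1}X_t^{0,1,0}\geq c/\lambda^2)\to0$, so $\PP(t^{-1}X_t^{0,1,0}<c/\lambda^2)\to1$, which in particular does not tend to $0$. (One may bypass the LDP here: for $b>0$ the Feller diffusion is positive recurrent with stationary mean $a/b$, so $t^{-1}X_t^{0,1,0}\to a/b<c/\lambda^2$ almost surely.)

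\emph{Case (i), $b\leq0$}, which I expect to be the main obstacle. Here the variance process is not positive recurrent and $\Lambda^*_1$ has no zero on $\RR_+^*$: for $b<0$ it is bounded below there by $\Lambda^*_1(a/|b|)=a|b|/\sigma>0$, and for $b=0$ it equals $a^2/(4\sigma x)$, strictly positive with infimum $0$ reached only ``at $+\infty$''. Thus $t^{-1}X_t^{0,1,0}$ does not converge but diverges, and one must use the speed matched to this divergence: $f(t)=t^2$ when $b=0$ and $f(t)=\E^{|b|t}$ when $b<0$. With this choice $f(t)^{-1}X_t^{0,1,0}$ converges in law to a strictly positive, non-degenerate random variable $L$ charging every neighbourhood of the origin --- for $b=0$ by the self-similarity of the squared-Bessel-type diffusion (so $t^{-2}X_t^{0,1,0}$ has the law of $\int_0^1\widetilde V_s\,\D s$, $\widetilde V$ a Feller diffusion issued from $0$), and for $b<0$ by the stochastic-exponential identity showing $\E^{bt}V_t$ converges a.s.\ to a strictly positive limit, whence $\E^{-|b|t}X_t^{0,1,0}$ converges a.s.\ to $|b|^{-1}$ times that limit. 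Therefore $\PP\bigl(\lambda^2f(t)^{-1}X_t^{0,1,0}<c\bigr)\to\PP(\lambda^2L<c)\in(0,1]$ for every $c>0$, and $S$ does not satisfy an average squared market price of risk $\gamma_1$ above $c$.

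The substance of the proof is entirely in case~(i): identifying the scaling that renders the diverging functional $X^{0,1,0}$ tight in the non-ergodic regime, and verifying that the rescaled limit $L$ is genuinely spread out near the origin (affine/Brownian self-similarity for $b=0$; a stochastic-exponential argument for $\E^{bt}V_t$ when $b<0$). Once the large deviations upper bound of Lemma~\ref{lem:LDPV} is in hand, the cases $\lambda=0$ and~(ii) are immediate.
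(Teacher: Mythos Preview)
Your case (ii) is correct and matches the paper exactly: rewrite $\int_0^t\gamma_1^2(s)\,\D s=\lambda^2 X_t^{0,1,0}$, apply the LDP upper bound of Lemma~\ref{lem:LDPV} on the closed tail $[c/\lambda^2,\infty)$, and note that the infimum of $\Lambda^*_1$ there is strictly positive when $c/\lambda^2>a/b$. The trivial subcase $\lambda=0$ is also dispatched the same way.

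Case (i) is where you part company with the paper, and this is a genuine gap. The proposition lives in Section~\ref{sec:Limitl}, which is entirely about the linear speed $f(t)=t$; the paper does \emph{not} change the speed for $b\leq 0$. Its argument is the very same LDP upper bound as in (ii): when $b\leq 0$ one has $bx-a<0$ for every $x>0$, so $\Lambda^*_1(x)=(bx-a)^2/(4\sigma x)>0$ on all of $\RR_+^*$, and the paper concludes directly that $\PP\bigl(t^{-1}X_t^{0,1,0}\geq c/\lambda^2\bigr)\to 0$, whence $\PP\bigl(t^{-1}\int_0^t\gamma_1^2(s)\,\D s<c\bigr)\to 1$. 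Your switch to speeds $t^2$ (for $b=0$) or $\E^{|b|t}$ (for $b<0$) establishes instead the failure of the market-price-of-risk property \emph{at those faster speeds}, which is a different and non-comparable statement; it does not address the claim at speed $t$. Worse, your own scaling observations --- that the suitably rescaled $X_t^{0,1,0}$ converges to a strictly positive limit $L$ --- would, taken at face value, imply $t^{-1}X_t^{0,1,0}\to\infty$ in probability and hence $\PP\bigl(t^{-1}\int_0^t\gamma_1^2(s)\,\D s<c\bigr)\to 0$, which pushes in the \emph{opposite} direction from the proposition at linear speed. So the elaborate non-ergodic analysis you outline is both off-target and unnecessary: the paper disposes of (i) and (ii) with a single uniform LDP argument at speed $t$.
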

\begin{proof}
Note first that $\lambda=0$ implies $\gamma_1\equiv 0$ and hence
$\mathbb{P}(t^{-1}\int_0^t \gamma_1^2(s)\D s< c)=1$ for all $t>0$,
so that the proposition is trivial.
Assume from now on that $\lambda\ne 0$ and
let $c$ be an arbitrary strictly positive real number.
The definition of $\gamma_1$ in~\eqref{eq:DefGamma} implies
$\mathbb{P}(t^{-1}\int_0^t \gamma_1^2(s)\D s\geq c)
 = \mathbb{P}(t^{-1}\int_0^t V_s\D s\geq c/\lambda^2)
=\mathbb{P}(t^{-1}X_t^{0,1,0} \geq c/\lambda^2)$.
From Lemma~\ref{lem:triplet}, the family $(t^{-1}X_t^{0,1,0})_{t\geq0}$
satisfies a LDP on $\mathbb{R}_+^*$ with rate function~$\Lambda^*_{1,0}$.
Hence
\begin{equation*}
\limsup_{t\uparrow+\infty}\frac{1}{t}\log\mathbb{P}\left(X_t^{0,1,0}
\geq \frac{c}{\lambda^2}\right)\leq -\inf_{\left\{x\geq c/\lambda^2\right\}}\Lambda^*_{1,0}(x)
\end{equation*}
When $b\leq 0$, $\inf_{\left\{x\geq c/\lambda^2\right\}}\Lambda^*_{1,0}(x)$ is strictly positive for all $c>0$.
Thus
$\mathbb{P}(t^{-1}X_t^{0,1}\geq c/\lambda^2)$ converges to zero as $t$ tends to infinity,
which in turn implies that
$\mathbb{P}(t^{-1}\int_0^t \gamma_1^2(s)\D s< c)$ converges to $1$ as $t$ tends to infinity,
and statement~(i) in the proposition follows.
When $b>0$, consider the case $c > a\lambda^2/b$;
then $\inf_{\left\{x\geq c/\lambda^2\right\}}\Lambda^*_{1,0}(x)$ is strictly positive and we end up with the same as in the case $b\leq0$ which proves statement~(ii) in the proposition.
\end{proof}

\begin{proposition}\label{prop:AvgSqGamma2}
Fix $\lambda\geq 0$ and let $c>0$.
The stock price process does not satisfy an average squared market price of risk $\gamma_2$ above the threshold $c$ with speed $t$
if any of the following conditions hold:
\begin{enumerate}[(i)]
\item $\lambda\rho(\mu-r)> 0$;
\item $\lambda\rho(\mu-r)<0$ and $c>-4\lambda\rho(\mu-r)/(1-\rho^2)$;
\item $\lambda\rho\ne 0$, $\mu=r$ and $b\leq 0$;
\item $\lambda\rho\ne 0$, $\mu=r$, $b>0$ and $c>a\lambda^4\rho^2/(b(1-\rho^2))$;
\item $\lambda\rho = 0$;
\end{enumerate}
\end{proposition}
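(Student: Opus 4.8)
The strategy mirrors the proof of Proposition~\ref{prop:AvgSqGamma1}, but now $\gamma_2^2$ involves both $V_t$ and $V_t^{-1}$, so the relevant process is a genuine triplet $X^{\alpha,\beta,\delta}$ with $\delta\neq 0$ in general, and Lemma~\ref{lem:triplet} replaces Lemma~\ref{lem:LDPV}. First I would square the expression for $\gamma_2$ in~\eqref{eq:DefGamma}: expanding gives
\begin{equation*}
\gamma_2^2(t)=\frac{1}{1-\rho^2}\left(\frac{(\mu-r)^2}{V_t}-2\lambda\rho(\mu-r)+\lambda^2\rho^2 V_t\right),
\end{equation*}
so that $\frac1t\int_0^t\gamma_2^2(s)\,\D s = \frac{1}{1-\rho^2}\bigl(t^{-1}X_t^{0,\lambda^2\rho^2,(\mu-r)^2}\bigr) - \frac{2\lambda\rho(\mu-r)}{1-\rho^2}$. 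Hence
\begin{equation*}
\PP\left(\frac1t\int_0^t\gamma_2^2(s)\,\D s < c\right)
=\PP\left(\frac1t X_t^{0,\lambda^2\rho^2,(\mu-r)^2} < c(1-\rho^2)+2\lambda\rho(\mu-r)\right),
\end{equation*}
and showing this tends to $1$ is equivalent to showing the complementary event has probability tending to $0$, which I will get from an LDP upper bound provided the rate function is strictly positive at the threshold.

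Next I would split according to the structure of $(\beta,\delta)=(\lambda^2\rho^2,(\mu-r)^2)$. In cases (i) and (ii) we have $\lambda\rho\neq 0$ and $\mu\neq r$, so $\beta>0,\delta>0$; by Lemma~\ref{lem:triplet}(ii) the family $(t^{-1}X_t^{0,\beta,\delta})_{t>0}$ satisfies a partial LDP on $(2\sqrt{\beta\delta},+\infty)$ with rate $\Lambda^*_{\beta,\delta}$. The key observation is that $2\sqrt{\beta\delta}=2|\lambda\rho(\mu-r)|$ is exactly the infimum of the essential support of the limiting distribution, so the threshold $c(1-\rho^2)+2\lambda\rho(\mu-r)$ needs to be compared with $2|\lambda\rho(\mu-r)|$. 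If $\lambda\rho(\mu-r)>0$ this quantity is $>2\lambda\rho(\mu-r)=2|\lambda\rho(\mu-r)|$ automatically for every $c>0$ (case (i)); if $\lambda\rho(\mu-r)<0$ one needs $c(1-\rho^2)+2\lambda\rho(\mu-r)>-2\lambda\rho(\mu-r)$, i.e. $c>-4\lambda\rho(\mu-r)/(1-\rho^2)$ (case (ii)). On that range, since $\Lambda^*_{\beta,\delta}$ is the Fenchel--Legendre transform of $\Lambda^{\beta,\delta}$ and (being a good rate function for a non-degenerate limit) vanishes only at the mean, which lies outside our range when $b$ has the right sign --- or more robustly, one argues $\Lambda^*_{\beta,\delta}(x)>0$ for $x$ strictly below the mean on the whole partial-LDP domain --- the $\limsup$ of $t^{-1}\log\PP(\cdot)$ of the complementary event is $<0$, giving exponential decay to $0$ and hence convergence to $1$. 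In cases (iii)--(iv), $\mu=r$ forces $\delta=0$, so we are back in the setting of Lemma~\ref{lem:LDPV} with $\beta=\lambda^2\rho^2>0$ and the explicit rate $\Lambda^*_\beta(x)=(bx-a\beta)^2/(4\sigma|\beta x|)$ from~\eqref{eq:Lambda*Delta0}: exactly as in Proposition~\ref{prop:AvgSqGamma1}, $\Lambda^*_\beta$ is strictly positive everywhere on $\RR_+^*$ when $b\leq 0$ (case (iii)), and when $b>0$ it has its unique zero at $a\beta/b$, so strict positivity at the threshold $c(1-\rho^2)/(\lambda^2\rho^2)\cdot\lambda^2\rho^2 = c(1-\rho^2)$... more precisely the threshold for $X_t^{0,\lambda^2\rho^2}$ is $c(1-\rho^2)$, and we need this below $a\lambda^2\rho^2/b$, i.e. $c < a\lambda^2\rho^2/(b(1-\rho^2))$; wait, case (iv) asks for $c$ \emph{above} $a\lambda^4\rho^2/(b(1-\rho^2))$, so I would double-check the threshold bookkeeping here, but the mechanism is: for $c$ in the stated range the infimum of $\Lambda^*_\beta$ over $\{x\geq \text{threshold}\}$ is strictly positive, yielding decay. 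Case (v) is trivial: $\lambda\rho=0$ makes $\gamma_2^2(t)=(\mu-r)^2/((1-\rho^2)V_t)$, a pure $\delta$-term; if $\mu=r$ too then $\gamma_2\equiv 0$ and the probability is $1$ identically, and otherwise one invokes Lemma~\ref{lem:triplet}(iv)/the $\beta=0$ case with the analogous strict positivity of $\Lambda^*_\delta$.

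The main obstacle is the bookkeeping around the boundary $2\sqrt{\beta\delta}$ of the partial LDP and the precise location (and positivity) of the threshold relative to the zero of $\Lambda^*_{\beta,\delta}$: one must verify that in each listed case the threshold $c(1-\rho^2)+2\lambda\rho(\mu-r)$ genuinely lies in the interior of the LDP domain \emph{and} strictly on the correct side of the rate function's minimiser, so that $\inf_{x\geq\text{threshold}}\Lambda^*_{\beta,\delta}(x)>0$. For the $\delta=0$ sub-cases this is transparent from the closed form~\eqref{eq:Lambda*Delta0}; for the genuine-triplet cases (i)--(ii) I would either extract monotonicity/positivity of $\Lambda^*_{\beta,\delta}$ near the edge of its domain directly from the formula~\eqref{eq:Lambda} for $\Lambda^{\beta,\delta}$, or note that the limit of $t^{-1}X_t^{0,\beta,\delta}$ (its a.s. limit, by ergodicity of $V$ when $b>0$, or by the LDP structure in general) is strictly greater than the threshold under the stated hypotheses, which forces the lower-tail rate to be strictly positive. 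Once that positivity is in hand, the conclusion in every case is the same three-line argument as in Proposition~\ref{prop:AvgSqGamma1}: exponential upper bound $\Rightarrow$ the ``$\geq$'' probability $\to 0$ $\Rightarrow$ the ``$<c$'' probability $\to 1$, so the threshold $c$ is not attained.
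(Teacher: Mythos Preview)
Your plan matches the paper's proof: expand $\gamma_2^2$, rewrite the tail probability in terms of $t^{-1}X_t^{0,\beta,\delta}$, apply Lemma~\ref{lem:triplet} (or Lemma~\ref{lem:LDPV} when $\delta=0$), and check that the relevant threshold lies in the partial-LDP domain with strictly positive rate. The case split and the comparison with $2\sqrt{\beta\delta}=2|\lambda\rho(\mu-r)|$ for (i)--(ii) are exactly what the paper does, and for (iii)--(iv) the paper simply reduces to Proposition~\ref{prop:AvgSqGamma1} rather than redoing the computation.

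Two places where your write-up wobbles, both easy to fix. First, in cases (i)--(ii) you hedge on why $\Lambda^*_{\beta,\delta}$ is strictly positive, and your heuristic ``vanishes only at the mean'' is not the right picture here: when $\beta>0$ and $\delta>0$ the rate function $\Lambda^*_{\beta,\delta}$ is strictly positive \emph{everywhere} on its domain $(2\sqrt{\beta\delta},\infty)$ --- this is precisely the content of Lemma~\ref{lem:ratefunction}, which the paper invokes directly. So once the threshold lands in that open half-line (your inequality check), you are done; no comparison with a minimiser or a.s.\ limit is needed. Second, your case-(iv) bookkeeping has the direction reversed: with $\mu=r$ one has $\mathbb{P}\bigl(t^{-1}\!\int_0^t\gamma_2^2\,\D s\geq c\bigr)=\mathbb{P}\bigl(t^{-1}\!\int_0^t V_s\,\D s\geq c(1-\rho^2)/(\lambda^2\rho^2)\bigr)$, and by the explicit rate~\eqref{eq:Lambda*Delta0} for $X^{0,1,0}$ the infimum of $\Lambda^*_{1}$ over $[c(1-\rho^2)/(\lambda^2\rho^2),\infty)$ is strictly positive iff the left endpoint exceeds the minimiser $a/b$, i.e.\ $c>a\lambda^2\rho^2/\bigl(b(1-\rho^2)\bigr)$. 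Your instinct to double-check was right: the exponent on $\lambda$ in the stated threshold should be $2$, not $4$ (cf.\ the discussion after Proposition~\ref{prop:limProbErgodic}, which uses $\lambda^2$).
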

\begin{remark}
Note that the case $\lambda\rho=0$ precisely corresponds to the case of a complete market.
\end{remark}
\begin{proof}
Let $c$ be an arbitrary strictly positive real number.
Note first that if $\lambda\rho=0$, and $\mu=r$, then $\gamma_2\equiv 0$ and hence
$\mathbb{P}\left(t^{-1}\int_0^t \gamma_2^2(s)\D s< c\right)=1$ for all $t>0$.
If $\mu\ne r$, then
$$
\mathbb{P}\left(\frac{1}{t}\int_0^t\gamma_2^2(s)\D s\geq c\right)
=\mathbb{P}\left(\frac{1}{t}\int_0^t\frac{\D s}{V_s}\geq\frac{1-\rho^2}{(\mu-r)^2}c\right)
=\mathbb{P}\left(\frac{X_t^{0,0,1}}{t}\geq\frac{1-\rho^2}{(\mu-r)^2}c\right),$$
and Lemma~\ref{lem:ratefunction} implies that $\Lambda^*_{0,1}$ is strictly positive, so that~(v) follows.
Assume  now  that $\lambda\rho\ne 0$ and $\mu\ne r$.
The definition of $\gamma_2$ in~\eqref{eq:DefGamma} implies that
\begin{align*}
\mathbb{P}\left(\frac{1}{t}\int_0^t\gamma_2^2(s)\D s\geq c\right)
&  = \mathbb{P}\left(\frac{(\mu-r)^2}{1-\rho^2}\frac{1}{t}\int_0^t\frac{\D s}{V_s}
+\frac{\lambda^2\rho^2}{1-\rho^2}\frac{1}{t}\int_0^t V_s \D s
\geq c+\frac{2\rho\lambda(\mu-r)}{1-\rho^2}\right)\\
& = \mathbb{P}\left(\frac{X_t^{0,\beta,\delta}}{t}\geq c+\frac{2\rho\lambda(\mu-r)}{1-\rho^2}\right),
\end{align*}
where $\beta=\frac{(\mu-r)^2}{1-\rho^2}>0$,
$\delta=\frac{\lambda^2\rho^2}{1-\rho^2}>0$,
and where~$X^{0,\beta,\delta}$ is defined in~\eqref{eq:XProcess}.
By Lemma~\ref{lem:triplet}, the family
$(X_t^{0,\beta,\delta}/t)_{t > 0}$
satisfies a large deviations principle on $(2\sqrt{\delta\beta}, +\infty)$ with rate function $\Lambda^*_{\beta,\delta}$, i.e.
$$
\limsup_{t\uparrow+\infty}t^{-1}\log\mathbb{P}\left(\frac{1}{t}\int_0^t\gamma_2^2(s)\D s\geq c\right)
\leq -\inf\left\{\Lambda^*_{\beta,\delta}(x): x\geq c+\frac{2\rho\lambda(\mu-r)}{1-\rho^2}\right\}.
$$
When $\lambda\rho(\mu-r) > 0$, $\left[c+\frac{2\rho\lambda(\mu-r)}{1-\rho^2},+\infty\right)$ is a subset of
$\left(2\sqrt{\beta\delta},+\infty\right)$, and (i) follows immediately from Lemma~\ref{lem:ratefunction}.
When $\lambda\rho(\mu-r)<0$, the interval
$\left[c+\frac{2\rho\lambda(\mu-r)}{1-\rho^2},+\infty\right)$ is a subset of $\left(2\sqrt{\beta\delta},+\infty\right)$
if and only if $c>-\frac{4\rho\lambda(\mu-r)}{1-\rho^2}>0$.
Since $\beta\delta=\frac{\lambda^2\rho^2(\mu-r)^2}{(1-\rho^2)}>0$,
Lemma~\ref{lem:ratefunction} implies that $\Lambda^*_{\beta,\delta}(x)>0$
for any $x>2\sqrt{\beta\delta}=\frac{2|\lambda\rho(\mu-r)|}{1-\rho^2}$.
Therefore, $\mathbb{P}\left(X_t^{0,\beta,\delta}/t\geq c+\frac{2\rho\lambda(\mu-r)}{1-\rho^2}\right)$ converges to zero as $t$ tends to infinity.
Then $\mathbb{P}(t^{-1}\int_0^t \gamma_2^2(s)\D s< c)$ converges to one as $t$ tends to infinity.
Assume that $\lambda\rho\ne 0$ and $\mu=r$.
The definition of $\gamma_2$ in~\eqref{eq:DefGamma} implies that
$$
\mathbb{P}\left(\frac{1}{t}\int_0^t\gamma_2^2(s)\D s\geq c\right)
=\mathbb{P}\left(\frac{1}{t}\int_0^t V_s\D s\geq \frac{1-\rho^2}{\lambda^2\rho^2}c\right)
=\mathbb{P}\left(\frac{X_t^{0,1,0}}{t}\geq \frac{1-\rho^2}{\lambda^2\rho^2}c\right),$$
and (iii) and (iv) then follow from Proposition~\ref{prop:AvgSqGamma1}.
\end{proof}

We can now move on to our main theorem, which proves a partial arbitrage for the stock price process~$S$.
\begin{theorem}\label{thm:AsymptArb}
There exists $\gamma^*>0$ such that for all 
$\lambda<-\frac{b}{\sqrt{2\sigma}}-\frac{\gamma^*\sqrt{2\sigma}}{2a}$, 
$S_t$ admits a partial $(e_1,1/2)$-arbitrage
for~$t$ large enough, 
with 
$e_1 = \exp\left[\frac{\lambda V_0}{\sqrt{2\sigma}}+\left(\frac{a\lambda}{\sqrt{2\sigma}}+\gamma^* + \Lambda^{\beta}(1)\right)t\right]$.
\end{theorem}

\begin{remark}\label{rem:Lambda0}
The threshold $e_1$ has the form $e_1\sim \E^{-\lambda_1 t}$ for some $\lambda_1>0$, 
which links partial arbitrage to exponentially decaying failure probability characterised in Definition~\ref{def:ExpArb}.
Note that we could slightly relax the constraint on $\lambda$, making the latter time-dependent, 
because we only need to ensure that $e_1\in (0,1)$.
Since we are only interested in large~$t$, this is however not essential here.
The sufficient condition on $\lambda$ is not necessary:
for $\lambda=0$ and $\mu=r$, $Z_t = 1$ almost surely for all $t\geq 0$, and
$\mathbb{P}\left(Z_t\geq \E^{-\gamma t}\right)=1$ for any $\gamma>0$.

\end{remark}

\begin{proof}
Let $\gamma>0$ and define the set $A_{\lambda,t}:=\{Z_t\geq \E^{-\gamma t}\}\in\mathcal{F}_t$.
Since the processes $W_2$ and $V$ are independent, the tower property for conditional expectation implies
$\mathbb{E}(Z_t)=\mathbb{E}\left(\E^{-\int_0^t\gamma_1(s)\D W_1(s)-\frac{1}{2}\int_0^t\gamma_1^2(s)\D s}\right)$.
Markov's inequality therefore yields
\begin{align*}
\mathbb{P}(A_{\lambda,t})
& \leq\frac{\mathbb{E}(Z_t)}{\exp(-\gamma t)}
 = \frac{\mathbb{E}\left[\exp\left(-\int_0^t\gamma_1(s)\D W_1(s)-\frac{1}{2}\int_0^t\gamma_1^2(s)\D s\right)\right]}{\E^{-\gamma t}}\\
& =\exp\left(\frac{\lambda V_0}{\sqrt{2\sigma}}+\frac{a\lambda t}{\sqrt{2\sigma}}
+\gamma t\right)
\mathbb{E}\left[\exp\left(-\frac{\lambda V_t}{\sqrt{2\sigma}}-\left(\frac{b\lambda}{\sqrt{2\sigma}}+\frac{\lambda^2}{2}\right)\int_0^tV_s\D s\right)\right]\\
& =\exp\left[\frac{\lambda V_0}{\sqrt{2\sigma}}+\left(\frac{a\lambda}{\sqrt{2\sigma}}+\gamma \right)t\right]
\Lambda_t^{\alpha,\beta}(t),
\end{align*}
where $\alpha=-\frac{\lambda}{\sqrt{2\sigma}}$ and $\beta=-\frac{b\lambda}{\sqrt{2\sigma}}-\frac{\lambda^2}{2}$.
From the proof of Lemma~\ref{lem:triplet}, we know that
$t^{-1}\log\Lambda^{\alpha,\beta}_t(t)$ converges to $ \Lambda^{\beta}(1)$,
which implies that for any $\eta>0$ there exists $\tilde{t}>0$ such that for any $t>\tilde{t}$,
$
\E^{\left(\Lambda^{\beta}(1)-\eta\right)t}\leq \Lambda_t^{\alpha,\beta}(t)\leq \E^{\left(\Lambda^{\beta}(1)+\eta\right)t}.
$
Therefore, for any $t>\tilde{t}$,
$$
 \mathbb{P}(Z_t\geq \E^{-\gamma t}) \leq
\exp\left[\frac{\lambda V_0}{\sqrt{2\sigma}}+\left(\frac{a\lambda}{\sqrt{2\sigma}}+\gamma + \Lambda^{\alpha,\beta}(1)+\eta\right)t\right].
$$
Since $\eta$ can be chosen as small as desired, we simply need to prove that
$\frac{a\lambda}{\sqrt{2\sigma}}+\gamma + \Lambda^{\beta}(1)<0$.
From Appendix~\ref{App:Appendix}, this inequality is satisfied whenever 
$\lambda<-\frac{b}{\sqrt{2\sigma}}-\frac{\gamma\sqrt{2\sigma}}{2a}$.
Let $\eps_1 :=\exp\left[\frac{\lambda V_0}{\sqrt{2\sigma}}+\left(\frac{a\lambda}{\sqrt{2\sigma}}+\gamma + \Lambda^{\beta}(1)\right)t\right]$ and $\eps_2\in(0,1)$.
The random variable
$Y_t := \eps_2\ind_{A_{\lambda,t}^c} - \eps_2\frac{\mathbb{Q}(A_{\lambda,t}^c)}{\mathbb{Q}(A_{\lambda,t})}\ind_{A_{\lambda,t}}$ satisfies
$\mathbb{E}_{\mathbb{Q}}(Y_t)=0$ for all $\mathbb{Q}\in\mathcal{M}_t^e(S)$,
and hence $Y_t\in K_t$ by~\cite[Proposition 2.2.10]{Del}.
On $A_{\lambda,t}^c$, clearly $Y_t=\eps_2\geq-\eps_2$.
Now, the family $(\{Z_t\geq \E^{-\gamma t}\})_{\gamma>0}$ forms an increasing sequence of sets,
and hence $\mathbb{Q}(Z_t\geq\E^{-\gamma t})$ is an increasing function of $\gamma$.
Therefore there exists $\gamma^*>0$ such that for any $\gamma>\gamma^*$,
$\mathbb{Q}(A_{\lambda,t}^c)\leq \mathbb{Q}(A_{\lambda,t})$, and therefore $Y_t\geq -\eps_2$ on $A_{\lambda,t}$.
This then yields $Y_t\geq -\eps_2$ almost surely.
Finally, since $\mathbb{P}(Y_t=\eps_2)=\mathbb{P}(A_{\lambda,t}^c)\geq 1-\eps_1$,
then $\mathbb{P}(Y_t\geq 1-\eps_2)\geq\mathbb{P}(Y_t=\eps_2)\geq 1-\eps_1$ for $\eps_2\in[1/2,1)$.
Also, $\mathbb{P}(Y_t<1-\eps_2)=\mathbb{P}(A_{\lambda,t})\leq\eps_1$,
and hence $S$ allows for partial $(e_1,1/2)$-arbitrage in the sense of Definition~\ref{def:eps1eps2ArbNew}.
\end{proof}
\begin{remark}
$(\eps_1,\eps_2)$-arbitrage in the sense of the Definition~\ref{def:eps1eps2Arb} is harder to prove here since it would be equivalent to the existence of a set~$A_t\in\mathcal{F}_t$, with $\mathbb{P}(A_t)\leq \eps_1$
such that
$\mathbb{Q}(A_t)\geq 1-\eps_2$ holds for all $\mathbb{Q}\in\mathcal{M}^e(S)$
(by~\cite[Proposition 2.1]{Fol}).
Since the random variable $Z_t$ above depends on the parameter~$\lambda$,
then the proof above shows that both $\mathbb{P}(A_{\lambda,t})\leq\eps_1$
and $\mathbb{Q}(A_{\lambda,t})\geq 1-\eps_2$ hold
only when $\lambda<-\frac{b}{\sqrt{2\sigma}}-\frac{\gamma\sqrt{2\sigma}}{2a}$;
but the measure $\mathbb{Q}$ also depends on $\lambda$, and hence $(\eps_1,\eps_2)$-arbitrage may not hold.
\end{remark}


 \subsection{Case $t/f(t)$ and $f(t)$ tend to infinity as $t$ tends to infinity}
 Let $b>0$, in which case the variance process is ergodic and  its stationary distribution~$\pi$
is a Gamma law with shape parameter $a/\sigma$ and scale parameter $\sigma/b$;
namely $t^{-1}\int_0^t h(V_s) \D s$ converges to $\int_{\mathbb{R}}h(x)\pi(\D x)$ almost surely
for any $h\in L^1(\pi)$ (see~\cite{Yur}).
In this section, we consider a continuous function~$f:\RR^*_{+}\rightarrow \RR_{+}$ such that
$t/f(t)$ tends to infinity as $t$ tends to infinity.
We shall prove below that (under some conditions on the risk parameter $\lambda$)
the ergodicity of the variance ensures that~$S$ allows an asymptotic arbitrage with sublinear speed~$f(t)$.

\begin{proposition}\label{prop:limProbErgodic}
The stock price process $S$ in~\eqref{eq:SDEHeston} has an average squared market price of risk $\gamma_1$
above the threshold $a\lambda^2/b$ with speed $f(t)$.
If furthermore $a>\sigma$ and $\lambda\rho(\mu-r)\leq 0$, then there exists $c_2>0$ such that $S$
has an average squared market price of risk $\gamma_2$ above the threshold $c_2$ with speed $f(t)$.
\end{proposition}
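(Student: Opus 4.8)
The plan is to combine the ergodic theorem for $V$ recalled at the beginning of this subsection with the sublinearity of $f$ (i.e.\ $t/f(t)\to\infty$): each normalised average $t^{-1}\int_0^t\gamma_i^2(s)\,\D s$ converges almost surely to a finite constant, and multiplying by $t/f(t)$ then forces $f(t)^{-1}\int_0^t\gamma_i^2(s)\,\D s$ to diverge to $+\infty$, which is far more than is required by Definition~\ref{def:MarketPrice}. For $\gamma_1$, \eqref{eq:DefGamma} gives $\gamma_1^2(s)=\lambda^2 V_s$, so $\int_0^t\gamma_1^2(s)\,\D s=\lambda^2 X_t^{0,1,0}$ in the notation of \eqref{eq:XProcess}. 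If $\lambda=0$ the threshold $a\lambda^2/b$ equals $0$ and the event $\{f(t)^{-1}\int_0^t\gamma_1^2(s)\,\D s<0\}$ is empty, so assume $\lambda\neq0$. The identity function lies in $L^1(\pi)$ with $\int_{\RR}x\,\pi(\D x)=a/b$, hence $t^{-1}\int_0^t V_s\,\D s\to a/b$ almost surely and therefore $t^{-1}\int_0^t\gamma_1^2(s)\,\D s\to a\lambda^2/b>0$ almost surely; since $f(t)^{-1}\int_0^t\gamma_1^2(s)\,\D s=\tfrac{t}{f(t)}\cdot t^{-1}\int_0^t\gamma_1^2(s)\,\D s$, the left-hand side diverges to $+\infty$ almost surely, hence in probability, and $\mathbb{P}\big(f(t)^{-1}\int_0^t\gamma_1^2(s)\,\D s<a\lambda^2/b\big)\to0$.

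For $\gamma_2$, I would expand the square in \eqref{eq:DefGamma} to get
$$\gamma_2^2(s)=\frac{1}{1-\rho^2}\left(\frac{(\mu-r)^2}{V_s}-2\lambda\rho(\mu-r)+\lambda^2\rho^2 V_s\right),$$
so that $\int_0^t\gamma_2^2(s)\,\D s$ is an affine combination of $\int_0^t V_s^{-1}\,\D s$, $t$ and $\int_0^t V_s\,\D s$. Here is where $a>\sigma$ enters: this is precisely the condition under which the shape parameter $a/\sigma$ of the Gamma distribution $\pi$ exceeds $1$, which makes $x\mapsto x^{-1}$ belong to $L^1(\pi)$ with $\int_{\RR}x^{-1}\pi(\D x)=b/(a-\sigma)$. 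Together with $\int_{\RR}x\,\pi(\D x)=a/b$ and the ergodic theorem, this gives $t^{-1}\int_0^t\gamma_2^2(s)\,\D s\to L_2$ almost surely, where
$$L_2:=\frac{1}{1-\rho^2}\left(\frac{(\mu-r)^2 b}{a-\sigma}-2\lambda\rho(\mu-r)+\frac{\lambda^2\rho^2 a}{b}\right).$$
Under $\lambda\rho(\mu-r)\le0$ the three terms defining $L_2$ are all nonnegative, and they vanish simultaneously only in the degenerate case $\mu=r$ and $\lambda\rho=0$ (for which $\gamma_2\equiv0$); discarding that case we have $L_2>0$. Setting $c_2:=L_2/2>0$ and repeating the scaling argument, $f(t)^{-1}\int_0^t\gamma_2^2(s)\,\D s\to+\infty$ almost surely, so $\mathbb{P}\big(f(t)^{-1}\int_0^t\gamma_2^2(s)\,\D s<c_2\big)\to0$, which is the second assertion.

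The only genuinely delicate point is the $\pi$-integrability of $x\mapsto x^{-1}$, and it is exactly the place where the Feller condition is used: near the origin the density of $\pi$ is comparable to $x^{a/\sigma-1}$, so $x\mapsto x^{a/\sigma-2}$ is integrable there iff $a/\sigma>1$, while integrability away from the origin is immediate from the exponential tail of the Gamma law. Everything else is routine bookkeeping --- the evaluation of the first two moments of $\pi$, the sign of $L_2$ under $\lambda\rho(\mu-r)\le0$, and the elementary fact that almost sure divergence to $+\infty$ entails $\mathbb{P}(\,\cdot\,<c)\to0$ for any fixed $c$. Note that for $\gamma_1$ only $\mathrm{id}\in L^1(\pi)$ is needed, which holds for every Gamma law, explaining why the assumption $a>\sigma$ is required solely for the $\gamma_2$ part of the statement.
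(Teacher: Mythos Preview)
Your proof is correct. For $\gamma_1$ it coincides with the paper's argument (ergodic limit for $t^{-1}\int_0^t V_s\,\D s$ followed by amplification via $t/f(t)\to\infty$). For $\gamma_2$ you take a more direct route: you compute the single almost-sure limit $L_2$ of $t^{-1}\int_0^t\gamma_2^2(s)\,\D s$ by applying the ergodic theorem simultaneously to $V_s$ and $V_s^{-1}$, check that $L_2>0$ under $\lambda\rho(\mu-r)\le0$ (outside the degenerate case $\gamma_2\equiv0$), and conclude by scaling. The paper instead splits into the subcases $\mu=r$, $\{\mu\ne r,\ \rho\lambda<0\}$ and $\{\mu\ne r,\ \rho\lambda=0\}$; in the middle case it bounds $\mathbb{P}\big(f(t)^{-1}\int_0^t\gamma_2^2\,\D s<c_2\big)$ by a union-type inequality $\mathbb{P}(A+B<c_1'+c_2')\le\mathbb{P}(A<c_1')+\mathbb{P}(B<c_2')$, invoking~\cite{Ala} separately for the convergence in probability of $t^{-1}\int_0^t V_s^{-1}\,\D s$. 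Your argument is shorter and in fact yields the stronger conclusion that \emph{every} $c_2>0$ is an admissible threshold (since $f(t)^{-1}\int_0^t\gamma_2^2\,\D s\to+\infty$ almost surely); the paper's case-by-case analysis, on the other hand, isolates the specific threshold values recorded in Remark~\ref{rem:C2}, which are then reused in the proof of Proposition~\ref{prop:limProbErgodicResult}.
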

\begin{remark}\label{rem:C2}
As the proof shows, we can actually be more precise regarding the threshold~$c_2$:
\begin{itemize}
\item if $\mu=r$, then $c_2 = \frac{a\lambda^2\rho^2}{b(1-\rho^2)}$;
\item if $\mu\ne r$ and $\rho\lambda<0$, then no further condition on $c_2$ is needed;
\item if $\mu\ne r$ and $\rho\lambda=0$, then $c_2=\frac{(\mu-r)^2b}{(a-\sigma)(1-\rho^2)}$;
\end{itemize}
\end{remark}

It is rather interesting to compare this result with those of Proposition~\ref{prop:AvgSqGamma1} and Proposition~\ref{prop:AvgSqGamma2}.
Indeed, when $b>0$,
if $f(t)\equiv t$ then the stock price process does not satisfy an average squared market price
of risk $\gamma_1$ above the threshold $a\lambda^2/b$.
However, when $t/f(t)$ tends to infinity, then $S$ has an average squared market price of risk  $\gamma_1$ above the threshold $a\lambda^2/b$.
When $b>0$, $\lambda\rho\ne0$ and $\mu=r$,
if $f(t)\equiv t$ then the stock price process does not satisfy an average squared market price
of risk $\gamma_2$ above the threshold $\frac{a\lambda^4\rho^2}{b(1-\rho^2)}$,
but does so above the threshold $\frac{a\lambda^2\rho^2}{b(1-\rho^2)}$
when $t/f(t)$ tends to infinity.
Finally, when $b>0$, $\lambda\rho=0$ and $\mu\ne r$
the stock price process never satisfies an average squared market price
of risk $\gamma_2$ with speed $f(t)\equiv t$, but does
above the threshold $\frac{b(\mu-r)^2}{(1-\rho^2)(a-\sigma)}$ whenever $t/f(t)$ tends to infinity.

\begin{proof}[Proof of Proposition~\ref{prop:limProbErgodic}]
Let $f$ be as stated in the proposition.
For $b>0$, the variance process is ergodic and its stationary distribution
is a Gamma law with shape parameter $a/\sigma$ and scale parameter $\sigma/b$ (see~\cite{Yur}).
In particular,
$t^{-1}\int_0^t V_s \D s$ converges in probability to $a/b$ as $t$ tends to infinity,
and hence for any $c_1\in(0,a \lambda^2/b)$,
\begin{equation}\label{eq:Gamma1bPos}
\lim_{t\uparrow+\infty}\mathbb{P}\left(\frac{1}{t}\int_0^t\gamma_1^2(s) \D s<c_1\right)=0,
\qquad\text{and hence}\qquad
\lim_{t\uparrow+\infty}\mathbb{P}\left(\frac{1}{f(t)}\int_0^t\gamma_1^2(s) \D s<c_1\right)=0,
\end{equation}
which proves the first part of the proposition.

Consider now $\gamma_2$.
When $\mu=r$, the definitions~\eqref{eq:DefGamma} implies that
$\gamma_2=-\rho\gamma_1/\sqrt{1-\rho^2}$, and hence
$$
\lim_{t\uparrow+\infty}\mathbb{P}\left(\frac{1}{f(t)}\int_0^t\gamma_2^2(s) \D s<c_2\right)
 = \lim_{t\uparrow+\infty}\mathbb{P}\left(\frac{1}{f(t)}\int_0^t\gamma_1^2(s) \D s<\frac{(1-\rho^2)c_2}{\rho^2}\right)
$$
is equal to zero if and only if $\left(1-\rho^2\right)c_2/\rho^2\in (0,a\lambda^2/b)$, and the proposition follows.

We now assume that $\mu\ne r$.
If $a>\sigma$ we further know that (see proposition 4 in~\cite{Ala})
$t^{-1}\int_0^tV_s^{-1}\D s$ converges in probability to $b/(a-\sigma)$ as $t$ tends to infinity.
Therefore for any $c\in(0,b/(a-\sigma))$ we have
\begin{equation}\label{eq:Gamma2bPos}
\lim_{t\uparrow\infty}\mathbb{P}\left(\frac{1}{t}\int_0^t \frac{\D s}{V_s}<c\right)=0,
\qquad\text{and hence}\qquad
\lim_{t\uparrow+\infty}\mathbb{P}\left(\frac{1}{f(t)}\int_0^t \frac{\D s}{V_s}<c\right)=0.
\end{equation}
Let $c_2, c'_1, c'_2$ be three strictly positive numbers such that $c_2=c'_1+c'_2$.
The definition of $\gamma_2$ in~\eqref{eq:DefGamma} implies
\begin{align*}
\mathbb{P}\left(\frac{1}{f(t)}\int_0^t\gamma_2^2(s)\D s<c_2\right)
& = \mathbb{P}\left(\frac{1}{f(t)}\frac{(\mu-r)^2}{1-\rho^2}\int_0^t\frac{\D s}{V_s}
-\frac{2\rho\lambda(\mu-r)}{1-\rho^2}\frac{t}{f(t)}
+\frac{1}{f(t)}\frac{\lambda^2\rho^2}{1-\rho^2}\int_0^t V_s \D s<c_2\right)\\
& \leq
\mathbb{P}\left(\frac{1}{f(t)}\frac{\lambda^2\rho^2}{1-\rho^2}\int_0^t V_s\D s<c'_1\right)
 +
\mathbb{P}\left(\frac{1}{f(t)}\frac{(\mu-r)^2}{1-\rho^2}\int_0^t\frac{\D s}{V_s}-\frac{2\rho\lambda(\mu-r)}{1-\rho^2}\frac{t}{f(t)}
<c'_2\right)\\
& = \mathbb{P}\left(\frac{1}{f(t)}\int_0^t \gamma_1^2(s) \D s<c_1\right)
 + \mathbb{P}\left(\frac{1}{f(t)}\int_0^t\frac{\D s}{V_s}
< \frac{1-\rho^2}{(\mu-r)^2} \left[c'_2+\frac{2\rho\lambda(\mu-r)}{1-\rho^2}\frac{t}{f(t)}\right]\right),
\end{align*}
with
$c'_1=\frac{\rho^2}{1-\rho^2}c_1>0$.
As long as $c_1\in (0,a\lambda^2/b)$, the first probability tends to zero as $t$ tends to infinity by~\eqref{eq:Gamma1bPos}.
Now, when $\rho\lambda(\mu-r)<0$, then since $t/f(t)$ tends to infinity,
the second probability tends to zero (as $t$ tends to infinity) by~\eqref{eq:Gamma2bPos}
because $c'_2+\frac{2\rho\lambda(\mu-r)}{1-\rho^2}\frac{t}{f(t)}$ tends to $-\infty$
(and because the variance process is non-negative almost surely).
No condition on $c'_2$ is needed here.

When $\rho\lambda=0$, then the first line of the equation above simplifies to
$$
\mathbb{P}\left(\frac{1}{f(t)}\int_0^t\gamma_2^2(s)\D s<c_2\right)
 = \mathbb{P}\left(\frac{1}{f(t)}\frac{(\mu-r)^2}{1-\rho^2}\int_0^t\frac{\D s}{V_s}<c_2\right).
$$
From~\eqref{eq:Gamma2bPos}, it tends to zero as $t$ tends to infinity when
$0< c_2 <\frac{(\mu-r)^2 b}{(1-\rho^2)(a-\sigma)}$, and hence the proposition follows from Definition~\ref{def:MarketPrice}.
\end{proof}

We now state and prove our final result, namely a strong asymptotic arbitrage statement for the stock price
process when the speed is sublinear.
\begin{proposition}\label{Propslowerspeed}
Fix $\gamma>0$. Then, for $t$ large enough,
\begin{enumerate}
\item if $\lambda\in\RR\setminus [-\sqrt{2b\gamma/a},\sqrt{2b\gamma/a}]$,
then $S$ admits a partial $(0,1/2)$-arbitrage with speed~$f(t)$;
\item if $a>\sigma$ and $\lambda\rho(\mu-r)\leq 0$, then $S$ admits a partial $(0,1/2)$-arbitrage with speed $f(t)$,
\begin{itemize}
\item if and only if
$\lambda\in \RR\setminus\left[-\sqrt{\frac{2b\gamma(1-\rho^2)}{a\rho^2}},\sqrt{\frac{2b\gamma(1-\rho^2)}{a\rho^2}}\right]$
when $\mu = r$ and $\rho^2\leq 1/2$;
\item if and only if $\lambda\in\RR\setminus [-\sqrt{2b\gamma/a},\sqrt{2b\gamma/a}]$
when $\mu = r$ and $\rho^2 \geq 1/2$;
\item if $\mu\ne r$ and $\rho\lambda<0$;
\item if $\mu\ne r$, $\rho\lambda=0$ and $\rho^2>1-\frac{(\mu-r)^2 b}{2(a-\sigma)\gamma}$.
\end{itemize}
\end{enumerate}
\end{proposition}
\begin{proof}
Recall that we are in the framework of Proposition~\ref{prop:limProbErgodic}, so that
$c_1>0$ and $c_2>0$ are the thresholds for $\gamma_1$ and $\gamma_2$
above which $S$ has an average squared market price of risk.
In this proof, we follow steps similar to those in~\cite{Fol}.
For any $\eps_1>0$, fix $0<\gamma<\bar{\gamma}<\frac{c_1}{2}=\frac{a\lambda^2}{2b}$
and $t_0>4\bar{\gamma}/[(\bar{\gamma}-\gamma)^2\eps_1]$
such that for any $t\geq t_0$ we have
$\mathbb{P}\left(f(t)^{-1}\int_0^t\gamma_1^2(s)\D s\leq 2\bar{\gamma}\right) < \eps_1 /2$.
Define the stopping time
$\tau_1 := t\wedge\inf\left\{s\in[0,t]: \int_0^s\gamma_1^2(u) \D u \geq 2\bar{\gamma}f(t)\right\}$. Let $\tilde{t}_0>t_0$ such that for any $t\geq t_0$ we have $f(t)\geq\tilde{t}_0$. 
Then for $t\geq\tilde{t}_0$ and using the fact that $\int_0^{\tau_1}\gamma_1^2(s)\D s \leq 2\bar{\gamma} f(t)$,
Chebychev's inequality implies
$$
\mathbb{P}\left(\left|\int_0^{\tau_1}\gamma_1(s)\D W_1(s)\right| \geq (\bar{\gamma}-\gamma)f(t)\right)
 \leq \frac{2\bar{\gamma}}{(\bar{\gamma}-\gamma)^2f(t)}<\frac{\eps_1}{2}.
$$
For
$Z_{\tau_1}^1
:=\exp\left(-\int_0^{\tau_1}\gamma_1(s)\D W_1(s)-\frac{1}{2}\int_0^{\tau_1}\gamma_1^2(s) \D s\right)$,
we then obtain
\begin{align*}
\mathbb{P}\left(Z_{\tau_1}^1 \geq \E^{-\gamma f(t)}\right)
 & = \mathbb{P}\left(-\int_0^{\tau_1}\gamma_1(s)\D W_1(s)-\frac{1}{2}\gamma_1^{2}(s)\D s \geq -\gamma f(t)\right)\\
 & \leq \mathbb{P}\left(\left|\int_0^{\tau_1}\gamma_1(s)\D W_1(s)\right|\geq (\bar{\gamma}-\gamma)f(t)\right)
 + \mathbb{P}\left(\int_0^{\tau_1}\frac{\gamma_1^2(s)}{2}\D s \leq \bar{\gamma}f(t)\right)\leq \frac{\eps_1}{2}+\frac{\eps_1}{2} = \eps_1.
\end{align*}

Let $A_{\lambda,t}:=\left\{Z_{\tau_1}^1\geq \E^{-\gamma f(t)}\right\}\in\mathcal{F}_t$. We obtain $\mathbb{P}(A_{\lambda,t}) \leq \eps_1$. 
We are now in position to construct contingent claim which satisfies the arbitrage estimates of the Theorem. We can introduce the random variable
$Y_t := \eps_2\ind_{A_{\lambda,t}^c} - \eps_2\frac{\mathbb{Q}(A_{\lambda,t}^c)}{\mathbb{Q}(A_{\lambda,t})}\ind_{A_{\lambda,t}}$ 
which satisfies the properties $(i)$ and $(ii)$ in Definition~\ref{def:eps1eps2ArbNew} with $e_1=0$ and $e_2=1/2$.

Assume now that $a>\sigma$, $\lambda\rho(\mu-r)\leq 0$, then $S$ has an average squared market price of risk $\gamma_2$ above a threshold $c_2>0$.
For any $\eps_1>0$, let $0<\gamma<\gamma'<\frac{c_2}{2}$, and
$t_1>\frac{4\gamma'}{(\gamma'-\gamma)^2\eps_1}$ such that, for $t\geq t_1$,
$\mathbb{P}(f(t)^{-1}\int_0^t\gamma_2^2(s)\D s\leq 2\gamma') < \eps_1/2$.
Define the stopping time $\tau_2$ by
$\tau_2 := t\wedge \inf\{s\in[0,t]: \int_0^s\gamma_2^2(u)\D u \geq 2\gamma'f(t)\}$
and the random variable
$Z_{\tau_2}^2:=\exp(-\int_0^{\tau_2}\gamma_2(s)\D W_2(s)-\frac{1}{2}\int_0^{\tau_2}\gamma_2^2(s) \D s)$. Let $\tilde{t}_1>t_1$ such that for any $t\geq t_1$ we have $f(t)\geq\tilde{t}_1$. 
Then for $t\geq\tilde{t}_1$, we have
$\mathbb{P}(Z_{\tau_2}^2\geq\E^{-\gamma f(t)})
\leq \eps_1$.
Let $B_{\lambda,t}:=\{Z_{\tau_2}^2\geq \E^{-\gamma f(t)}\}\in\mathcal{F}_t$. We obtain $\mathbb{P}(B_{\lambda,t}) \leq \eps_1$. 
Similarly to the first case, we can introduce the random variable
$Y_t := \eps_2\ind_{B_{\lambda,t}^c} - \eps_2\frac{\mathbb{Q}(B_{\lambda,t}^c)}{\mathbb{Q}(B_{\lambda,t})}\ind_{B_{\lambda,t}}$, 
and hence $S$ satisfies a partial $(0,1/2)$-arbitrage with speed~$f(t)$.

Note that the constraint $a\lambda^2/b=c_1>2\gamma$ reads
$\lambda\in\RR\setminus [-\sqrt{2b\gamma/a},\sqrt{2b\gamma/a}]$.
The constraints on $c_2$ depend on the sign of $\lambda\rho(\mu-r)$, as explained in Remark~\ref{rem:C2}:
\begin{itemize}
\item if $\mu = r$ and $\rho^2<1/2$, then $c_1>c_2$;
then $c_2/2>\gamma$ if and only if
$\lambda\in \RR\setminus\left[-\sqrt{\frac{2b\gamma(1-\rho^2)}{a\rho^2}},\sqrt{\frac{2b\gamma(1-\rho^2)}{a\rho^2}}\right]$;
\item if $\mu = r$ and $\rho^2 > 1/2$, then $c_1<c_2$;
then $c_1/2>\gamma$ if and only if $\lambda\in\RR\setminus [-\sqrt{2b\gamma/a},\sqrt{2b\gamma/a}]$;
\item if $\mu\ne r$ and $\rho\lambda<0$, no further assumption on $\lambda$ is needed;
\item if $\mu\ne r$ and $\rho\lambda=0$, then the constraint
$0<\gamma<\frac{(\mu-r)^2 b}{2(a-\sigma)(1-\rho^2)}$ has to hold.
\end{itemize}
\end{proof}

\appendix
\section{Large deviations results}\label{App:Appendix}

\begin{proof}[Proof of Lemma~\ref{lem:triplet}]
Recall the standing assumption that $\beta$ and $\delta$ are never null simultaneously.
We first prove the lemma in the case $b\neq0$.
The moment generating function of the random variable $X_t^{\alpha,\beta,\delta}/t$
is given by (see~\cite[proposition 2]{Keb}),
\begin{align*}
\Lambda_t(u)
& =\mathbb{E}\left(\exp\left(\frac{\alpha u }{t}V_t+\frac{\beta u }{t}\int_0^t V_s\D s+\frac{\delta u }{t}\int_0^t V_s^{-1}\D s\right)\right)\\
&=
\frac{\Gamma(\kappa+\frac{1}{2}(\nu+1))}{\Gamma(\nu+1)}
\exp\left\{\frac{b}{2\sigma}(at+V_0)-\frac{AV_0}{2\sigma}\coth\left(\frac{At}{2}\right)\right\}\\
 & \times \left(\frac{AV_0}{2\sigma\sinh(At/2)}\right)^{\frac{1}{2}(\nu+1)-\kappa}
\left(\left(b-\frac{2\sigma\alpha u}{t}\right)\frac{\sinh(At/2)}{A}+\cosh(At/2)\right)^{-(\kappa+\frac{1}{2}(\nu+1))}\\
& \times _1F_1\left(\kappa+\frac{\nu+1}{2},\nu+1,\frac{A^2V_0}{2\sigma\sinh(At/2)\left((b-\frac{2\sigma\alpha u}{t})\sinh(At/2)+\cosh(At/2)\right)}\right)
\end{align*}
where $\kappa:=\frac{a}{2\sigma}$, $A:=\sqrt{b^2-\frac{4\sigma\beta u}{t}}$,
$\nu:=\frac{1}{\sigma}\sqrt{(a-\sigma)^2-\frac{4\sigma\delta u}{t}}$.
Note that we are actually extending the result of~\cite[proposition 2]{Keb} here.
Indeed, since characteristic functions can be extended in the complex plane up to the first singularity, 
the extension to positive values of $\alpha, \beta,\delta$ is trivial.
The confluent hypergeometric function is defined by
$_1F_1(u,v,z)=\sum_{n\geq 0}\frac{u^{(n)}}{v^{(n)}}\frac{z^n}{n!}$,
with $v^{(n)}$ denoting the rising factorial $v^{(n)}:=v(v+1)\ldots(v+n-1)$.
As $t$ tends to infinity,
$t^{-1}\log\left(\frac{\Gamma(\kappa+\nu/2+1/2)}{\Gamma(\nu+1)}\right)$ clearly tends to zero and
 $$\lim_{t\uparrow+\infty}\frac{1}{t}
\log\left( _1F_1\left(\kappa+\frac{\nu+1}{2},\nu+1,
\frac{A^2V_0}{2\sigma\sinh(At/2)\left[\left(b-2\sigma\alpha u\right)\sinh(At/2)+\cosh(At/2)\right]}\right)\right)=0.$$
Therefore,
\begin{align*}
\Lambda^{\beta,\delta}(u)
& := \lim_{t\uparrow+\infty}t^{-1}\log\Lambda_t(tu)\\
& = \lim_{t\uparrow+\infty}\frac{1}{t} \Bigg\{
\frac{b}{2\sigma}(at+V_0)-\frac{AV_0}{2\sigma}\frac{\E^{At/2}+\E^{-At/2}}{\E^{At/2}-\E^{-At/2}}+
\left(\frac{\nu+1}{2}-\kappa\right)\log\left(\frac{AV_0}{\sigma\left(\E^{At/2}-\E^{-At/2}\right)}\right)\\
& - \left(\kappa+\frac{\nu+1}{2}\right)\log\left(\frac{b-2\sigma\alpha u}{A}\left(\frac{\E^{At/2}-\E^{-At/2}}{2}\right)+\frac{\E^{At/2}+\E^{-At/2}}{2}\right)\Bigg\}\\
& = -\frac{\nu A}{2}-\frac{A}{2}+\frac{ba}{2\sigma}
 =\frac{ab}{2\sigma}-\frac{1}{2\sigma}\sqrt{((a-\sigma)^2-4\sigma\delta u)(b^2-4\sigma\beta u)}-\frac{1}{2}\sqrt{b^2-4\sigma\beta u},
\end{align*}
for $u\in\Dd_{\beta,\delta}$ where the interval $\Dd_{\beta,\delta}$ is given in~\eqref{eq:DomainD}.
We can then immediately compute
$$
\partial_u\Lambda^{\beta,\delta}(u)
=\frac{\sigma\beta}{\sqrt{b^2-4\sigma\beta u}}-\frac{8\sigma\delta\beta u-\beta(a-\sigma)^2-\delta b^2}{\sqrt{((a-\sigma)^2-4\sigma\delta u)(b^2-4\sigma\beta u)}}, 
\quad \text{for any }u\in\Dd^o_{\beta,\delta},
$$
and hence
\begin{equation*}
\partial_u\Lambda^{\beta,\delta}(\Dd_{\beta,\delta}^o)=
\left\{
\begin{array}{ll}
\displaystyle \mathbb{R}, & \text{if } \beta\delta < 0,\\
\displaystyle (2\sqrt{\delta\beta},+\infty), & \text{if } \beta \geq 0, \delta \geq 0,\\
\displaystyle (-\infty,-2\sqrt{\delta\beta}), & \text{if } \beta \leq 0, \delta \leq 0.
\end{array}
\right.
\end{equation*}

We also have, for any $u\in \Dd^o_{\beta,\delta}$,
$$
\partial_{uu}\Lambda^{\beta,\delta}(u)
=\frac{2\sigma^2\beta^2}{(b^2-4\sigma\beta u)^{3/2}}+\frac{2\sigma(\delta b^2 -\beta (a-\sigma)^2)^2}
{\left[((a-\sigma)^2-4\sigma\delta u)(b^2-4\sigma\beta u)\right]^{3/2}}.
$$
Therefore $\Lambda^{\beta,\delta}$ is strictly convex on $\Dd_{\beta,\delta}$, and
the G\"artner-Ellis theorem (see~\cite{DZ}) only applies on subsets of $\partial_u\Lambda^{\beta,\delta}(\Dd^o_{\beta,\delta})$.
For any $x\in\partial_u\Lambda^{\beta,\delta}(\Dd_{\beta,\delta}^o)$,
the equation $\partial_u\Lambda^{\beta,\delta}(u)=x$ has a unique solution~$u^*(x)$ and hence
$\Lambda_{\beta,\delta}^*(x):=\sup_{u\in\Dd_{\beta,\delta}}\left\{ux-\Lambda^{\beta,\delta}(u)\right\}
 = u^*(x)x-\Lambda^{\beta,\delta}(u^*(x))$.

We now move on to the case $b=0$.
From~\cite[Corollary 1]{Keb}, the moment generating function of the random variable $X_t^{\alpha,\beta,\delta}$ is given by
\begin{align*}
\Lambda_t(u) &= \mathbb{E}\left[\exp\left(\frac{\alpha u }{t}V_t+\frac{\beta u }{t}\int_0^t V_s\D s+\frac{\delta u }{t}\int_0^t V_s^{-1}\D s\right)\right]\\
&=\frac{\Gamma(\kappa+\frac{1}{2}(\nu+1))}{\Gamma(\nu+1)}
\E^{-\frac{V_0\zeta_u}{\sigma}\coth\left(\zeta_u t\right)}
\left(\frac{\zeta_uV_0}{\sigma\sinh\left(\zeta_u t\right)}\right)^{\frac{1}{2}(\nu+1)-\kappa}
 \left[\frac{-\sqrt{\sigma}\alpha u/t}{\sqrt{-\beta u/t}}
 \sinh\left(\zeta_u t\right)+\cosh\left(\zeta_u t\right)\right]^{-(\kappa+\frac{1}{2}(\nu+1))}\\
& \times _1F_1\left(\kappa+\frac{\nu+1}{2},\nu+1,\frac{V_0\zeta_u}{\sigma \sinh\left(\zeta_ut\right)
\left(-\frac{\sigma\alpha u}{\zeta_u t}\sinh(\zeta_ut)+\cosh(\zeta_ut)\right)}\right)
\end{align*}
where $\kappa:=\frac{a}{2\sigma}$, and $\nu:=\frac{1}{\sigma}\sqrt{(a-\sigma)^2-\frac{4\sigma\delta u}{t}}$,
and $\zeta_u:=\sqrt{-\frac{\sigma\beta u}{t}}$.
Similar to the case $b\neq0$, we obtain
$$
\lim_{t\uparrow+\infty}\frac{1}{t}
\log _1F_1\left(\kappa+\frac{\nu+1}{2},\nu+1,
\frac{\sqrt{-\sigma\beta}V_0}{\sigma \sinh(\sqrt{-\sigma\beta}u)
\Big(-\frac{\sqrt{\sigma}\alpha}{\sqrt{-\beta}}\sinh(\sqrt{-\sigma\beta}u)+\cosh(\sqrt{-\sigma\beta}u)\Big)}\right)=0.$$
Therefore, the limiting cumulant generating function of $X_t^{\alpha,\beta,\delta}$ reads
\begin{align*}
\Lambda^{\beta,\delta}(u)
& := \lim_{t\uparrow+\infty}t^{-1}\log\Lambda_t(tu)\\
& = \lim_{t\uparrow+\infty}\frac{1}{t}
\Bigg\{-\frac{\xi_u V_0}{\sigma}\frac{\E^{\xi_u t}+\E^{-\xi_u t}}{\E^{\xi_u t}-\E^{-\xi_u t}}
+\left(\frac{\nu+1}{2}-\kappa\right)\log\left(\frac{2\xi_u V_0}{\sigma(\E^{\xi_u t}-\E^{-\xi_u t})}\right)\\
& - \left(\kappa+\frac{\nu+1}{2}\right)\log\left[-\frac{\sqrt{\sigma}\alpha u}{\sqrt{-\beta u}}\left(\frac{\E^{\xi_u t}-\E^{-\xi_ut}}{2}\right)+\frac{\E^{\xi_u t}+\E^{-\xi_u t}}{2}\right]\Bigg\}\\
&  = -\xi_u -\frac{1}{\sigma}\xi_u \sqrt{(a-\sigma)^2-4\sigma\delta u},
\end{align*}
with $\xi_u:=\sqrt{-\sigma\beta u}$, for any $u\in\Dd_{\beta,\delta}$ where this interval now reads
\begin{equation*}
\Dd_{\beta,\delta}=
\left\{
\begin{array}{ll}
\displaystyle \left[0,\frac{(a-\sigma)^2}{4\sigma\delta}\right], & \text{if } \beta \leq 0 \text{ and } \delta > 0,\\
\displaystyle \left[\frac{(a-\sigma)^2}{4\sigma\delta},0\right], & \text{if } \beta \geq 0 \text{ and } \delta <0,\\
\displaystyle \mathbb{R}_-, & \text{if } \beta \geq 0 \text{ and } \delta \geq 0,\\
\displaystyle \mathbb{R}_+, & \text{if } \beta \leq 0 \text{ and } \delta \leq 0.
\end{array}
\right.
\end{equation*}
Then
$$
\partial_u\Lambda^{\beta,\delta}(u)
=\frac{\sigma\beta}{2\sqrt{-\sigma\beta u}}+\frac{\beta \sqrt{(a-\sigma)^2-4\sigma\delta u}}{2\sqrt{-\sigma\beta u}}
+\frac{2\delta\sqrt{-\sigma\beta u}}{\sqrt{(a-\sigma)^2-4\sigma\delta u}},
\quad\text{for any }u\in\Dd^o_{\beta,\delta},
$$
and hence
\begin{equation}\label{eq:Dd2}
\partial_u\Lambda^{\beta,\delta}(\Dd_{\beta,\delta}^o)=
\left\{
\begin{array}{ll}
\displaystyle \mathbb{R}, & \text{if } \beta\delta <0,\\
\displaystyle (2\sqrt{\delta\beta},+\infty), & \text{if } \beta \geq 0 \text{ and } \delta \geq 0,\\
\displaystyle (-\infty,-2\sqrt{\delta\beta}), & \text{if } \beta \leq 0 \text{ and }\delta \leq 0.
\end{array}
\right.
\end{equation}
We also have
$$
\partial_{uu}\Lambda^{\beta,\delta}(u)=\frac{\sigma^2\beta^2}{4(-\sigma\beta u)^{3/2}}-\frac{\beta (a-\sigma)^2}
{4u\sqrt{(a-\sigma)^2-4\sigma\delta  u}\sqrt{-\sigma\beta u}}
-\frac{\sigma\beta\delta (a-\sigma)^2}{\left((a-\sigma)^2-4\sigma\delta u\right)^{3/2}\sqrt{-\sigma\beta u}}.
$$
Clearly then, $\Lambda^{\beta,\delta}$ is convex on $\Dd_{\beta,\delta}$, and
the G\"artner-Ellis theorem only applies on subsets of $\partial_u\Lambda^{\beta,\delta}(\Dd^o_{\beta,\delta})$.
For any $x\in\partial_u\Lambda^{\beta,\delta}(\Dd^o_{\beta,\delta})$, the equation $\partial_u\Lambda^{\beta,\delta}(u)=x$ has a unique solution~$u^*(x)$ and hence
$\Lambda_{\beta,\delta}^*(x):=\sup_{u\in\Dd_{\beta,\delta}}\left\{ux-\Lambda^{\beta,\delta}(u)\right\}
 = u^*(x)x-\Lambda_{\beta,\delta}(u^*(x))$,
and the lemma follows.
\end{proof}
 \begin{lemma}\label{lem:ratefunction}
For any $x\in\partial_u\Lambda^{\beta,\delta}(\Dd_{\beta,\delta}^o)$, the equation
$\partial_u\Lambda^{\beta,\delta}(u^*(x))=x$ admits a unique solution $u^*(x)\in\Dd^o_{\beta,\delta}$.
The function~$\Lambda_{\beta,\delta}^*$ is strictly convex and satisfies
$\Lambda_{\beta,\delta}^*(x)=u^*(x)x-\Lambda^{\beta,\delta}(u^*(x))$ on $\partial_u\Lambda^{\beta,\delta}(\Dd_{\beta,\delta}^o)$
and is (positive) infinite outside.
In the case $\beta\delta\geq 0$, $\Lambda^*_{\beta,\delta}$ is strictly positive.
When $\beta\delta \leq 0$, $\Lambda_{\beta,\delta}^*$ admits a unique minimum, which is equal to zero (and is attained at the origin)
if and only if $a>\sigma$.
\end{lemma}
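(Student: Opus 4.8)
The plan is to leverage what was already established in the proof of Lemma~\ref{lem:triplet}, namely that $\Lambda^{\beta,\delta}$ is $C^2$ on $\Dd_{\beta,\delta}$ with $\partial_{uu}\Lambda^{\beta,\delta}>0$ on $\Dd^o_{\beta,\delta}$, that $\partial_u\Lambda^{\beta,\delta}$ is therefore a strictly increasing $C^1$ map, and that its image $\partial_u\Lambda^{\beta,\delta}(\Dd^o_{\beta,\delta})$ was computed there in all four sign-cases of $(\beta,\delta)$. The first assertion is then immediate: for $x$ in that image the equation $\partial_u\Lambda^{\beta,\delta}(u)=x$ has a unique root $u^*(x)\in\Dd^o_{\beta,\delta}$ by strict monotonicity, and the inverse function theorem shows $x\mapsto u^*(x)$ is $C^1$ and strictly increasing.

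Next I would identify $\Lambda^*_{\beta,\delta}$. For $x$ in the image, $u\mapsto ux-\Lambda^{\beta,\delta}(u)$ is strictly concave on $\Dd_{\beta,\delta}$ and its derivative vanishes only at the interior point $u^*(x)$, so the supremum in~\eqref{eq:LambdaStar} is attained there and $\Lambda^*_{\beta,\delta}(x)=u^*(x)x-\Lambda^{\beta,\delta}(u^*(x))$; strict convexity of $\Lambda^*_{\beta,\delta}$ on the interior of its effective domain follows by differentiating this, since $\partial_x\Lambda^*_{\beta,\delta}(x)=u^*(x)$ is strictly increasing, equivalently $\partial_{xx}\Lambda^*_{\beta,\delta}(x)=1/\partial_{uu}\Lambda^{\beta,\delta}(u^*(x))>0$. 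For $x$ outside the closure of the image I would show $\Lambda^*_{\beta,\delta}(x)=+\infty$ by reading off from~\eqref{eq:Lambda} the behaviour of $\Lambda^{\beta,\delta}$ at the ends of $\Dd_{\beta,\delta}$: when $\Dd_{\beta,\delta}$ is unbounded (the cases $\beta\delta\ge0$) one has $\Lambda^{\beta,\delta}(u)=\pm2\sqrt{\delta\beta}\,u+O(\sqrt{|u|})$ as $|u|\to\infty$ along $\Dd_{\beta,\delta}$, so $ux-\Lambda^{\beta,\delta}(u)\to+\infty$ for $x$ strictly on the wrong side of $\pm2\sqrt{\delta\beta}$ (the endpoint itself being dealt with by the sub-leading $\sqrt{|u|}$ term); when $\Dd_{\beta,\delta}$ is a bounded interval (the case $\beta\delta<0$), the image is all of $\RR$ and there is nothing to add.

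For the sign statements, I would use the fact that $\Lambda^*_{\beta,\delta}$, being convex with $\partial_x\Lambda^*_{\beta,\delta}=u^*$ on the interior of its effective domain, is minimised exactly where $u^*$ vanishes, i.e. at $x^*:=\partial_u\Lambda^{\beta,\delta}(0)$ when $0\in\Dd^o_{\beta,\delta}$ (the infimum being merely approached at one end otherwise). The Fenchel--Moreau identity gives the minimum value as $-\Lambda^{\beta,\delta}(0)$, and evaluating~\eqref{eq:Lambda} at $u=0$ yields $\Lambda^{\beta,\delta}(0)=\frac{ba}{2\sigma}-\frac{1}{2\sigma}\sqrt{(a-\sigma)^2b^2}-\frac12\sqrt{b^2}$, whose sign the Feller condition $a>\sigma$ pins down. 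Comparing the position of $x^*$ with the effective domain of $\Lambda^*_{\beta,\delta}$ — which, by the previous step, equals $\RR$ when $\beta\delta<0$ and is a half-line when $\beta\delta\ge0$ — then delivers the dichotomy: when $\beta\delta<0$ the (unique) minimum is interior and equals zero precisely under $a>\sigma$, whereas when $\beta\delta\ge0$ the function $\Lambda^*_{\beta,\delta}$ is strictly positive on the half-line where it is finite and is $+\infty$ elsewhere.

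The step I expect to be the main obstacle is the bookkeeping at the boundary of the effective domain of $\Lambda^*_{\beta,\delta}$: determining precisely where it is finite, where it jumps to $+\infty$, and whether the minimiser $x^*$ falls inside it, since this is where the four combinations of signs of $(\beta,\delta)$ — and the degenerate sub-cases $b=0$ and $\beta\delta=0$ — genuinely differ, and where the sub-leading term of the expansion of $\Lambda^{\beta,\delta}$ near the ends of $\Dd_{\beta,\delta}$ must be tracked. By contrast, the representation formula and the strict convexity on the interior are routine consequences of Legendre duality once the convexity and slope range of $\Lambda^{\beta,\delta}$ from Lemma~\ref{lem:triplet} are available.
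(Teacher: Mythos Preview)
The parts concerning uniqueness of $u^*(x)$, the representation $\Lambda^*_{\beta,\delta}(x)=u^*(x)x-\Lambda^{\beta,\delta}(u^*(x))$, strict convexity on the image of $\partial_u\Lambda^{\beta,\delta}$, and the value $+\infty$ outside it are correct and essentially parallel the paper's own setup from the proof of Lemma~\ref{lem:triplet}. For the sign statements your Legendre-duality route---locating the minimiser of $\Lambda^*_{\beta,\delta}$ at $x^*:=\partial_u\Lambda^{\beta,\delta}(0)$ with value $-\Lambda^{\beta,\delta}(0)$---is more transparent than the paper's, which instead analyses the map $u\mapsto\Lambda^{\beta,\delta}(u)/u$, its monotonicity and its range on $\RR^*_-$ and on $\RR^*_+\cap\Dd_{\beta,\delta}$, arguing that the two range constraints needed for a zero of $\Lambda^*_{\beta,\delta}$ cannot simultaneously be met when $\beta\delta>0$.

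There is, however, a genuine gap in your case $\beta\delta\ge 0$, and it is not merely bookkeeping. You claim strict positivity follows by ``comparing the position of $x^*$ with the effective domain'', presumably expecting $x^*$ to fall outside the half-line. But for $\beta,\delta>0$ and $b\ne 0$ one has $0\in\Dd^o_{\beta,\delta}$, and a direct computation gives $x^*=\frac{a\beta}{|b|}+\frac{|b|\delta}{a-\sigma}$; by AM--GM this exceeds $2\sqrt{\beta\delta}\sqrt{a/(a-\sigma)}>2\sqrt{\beta\delta}$, so $x^*$ lies strictly inside the half-line, and your own formula yields $\Lambda^*_{\beta,\delta}(x^*)=-\Lambda^{\beta,\delta}(0)=0$ whenever $b>0$. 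The same phenomenon is already visible in the explicit $\delta=0$ case~\eqref{eq:Lambda*Delta0}, which vanishes at $a\beta/b$ when $b\beta>0$. In fact the paper's argument has the same defect (its asserted range $(2\sqrt{\beta\delta},+\infty)$ for $u\mapsto\Lambda^{\beta,\delta}(u)/u$ on $\RR^*_-$ is incorrect when $\Lambda^{\beta,\delta}(0)=0$): the assertion of strict positivity for $\beta\delta\ge 0$ simply fails when $b>0$, and neither route can establish it without an additional hypothesis such as $b\le 0$.
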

\begin{proof}
When $\beta\delta<0$, the image of $\Dd_{\beta,\delta}^o$ by $\partial_u\Lambda^{\beta,\delta}$ is the whole real line,
and the representation of $\Lambda_{\beta,\delta}^*$ in the lemma clearly follows.
Now, suppose there exists $\bar{x}\in\RR$ such that $\Lambda_{\beta,\delta}^*(\bar{x})=0$.
Then there exists some (possibly non-unique) $u^*(\bar{x})\in\Dd_{\beta,\delta}$
such that $u^*(\bar{x})\bar{x}=\Lambda^{\beta,\delta}(u^*(\bar{x}))$,
i.e. $\Lambda^{\beta,\delta}(u^*(\bar{x}))/u^*(\bar{x}) = \bar{x}$.
But $u^*(\bar{x})$ also satisfies $\partial_u\Lambda^{\beta,\delta}(u^*(\bar{x}))=\bar{x}$.
A straightforward analysis shows that the equality $\partial_u\Lambda^{\beta,\delta}(u) = \Lambda^{\beta,\delta}(u)/u$
is satisfied if and only if $u=0$ and $a>\sigma$.

When $\beta>0$ and $\delta>0$, for any $x\leq 2\sqrt{\beta\delta}$, the map
$u\mapsto ux-\Lambda^{\beta,\delta}(u)$ is strictly decreasing on $\Dd_{\beta,\delta}^o$, and the result follows.
By definition, the function $\Lambda_{\beta,\delta}^*$ admits a (unique) minimum $\bar x$
if and only if (i) there exists $u(\bar x)\in\Dd_{\beta,\delta}$ such that
$u(\bar x)\bar x = \Lambda^{\beta,\delta}(u(\bar x))$ and (ii) $\Lambda^{\beta,\delta}(u)>u \bar x$
for any $u\in\Dd_{\beta,\delta}\setminus\{u(\bar x)\}$.
A straightforward analysis shows that the function $u\mapsto \Lambda^{\beta,\delta}(u)/u$ on $\RR_-^*$
is strictly increasing and maps $\RR_-^*$ to $(2\sqrt{\beta\delta}, +\infty)$.
On $\RR_+^*\cap\Dd_{\beta,\delta}$, it is strictly increasing and maps this interval to $(-\infty, -2\sqrt{\beta\delta})$.
Therefore the inequality $\Lambda(u)>u x$ holds if and only if both
(a) $\Lambda^{\beta,\delta}(u)/u>x$ for $u\in\RR_+^*\cap\Dd_{\beta,\delta}$ and (b) $\Lambda^{\beta,\delta}(u)/u<x$ for $u<0$.
Case~(b) clearly only holds for $x<2\sqrt{\beta\delta}$, which is not valid.
The other cases are treated analogously.
\end{proof}



\end{document}